\newcommand{\beq}{\begin{equation}}
\newcommand{\eeq}{\end{equation}}
\newcommand{\bi}{\begin{itemize}}
\newcommand{\bd}{\begin{description}}
\newcommand{\ei}{\end{itemize}}
\newcommand{\ed}{\end{description}}
\newcommand{\bc}{\begin{center}}
\newcommand{\ec}{\end{center}}
\newtheorem{df}{Definition}[section]
\newtheorem{ass}[df]{Assumption}
\newtheorem{lm}[df]{Lemma}
\newtheorem{prop}[df]{Proposition}
\newtheorem{thm}[df]{Theorem}
\newtheorem{rem}[df]{Remark}
\newcommand{\sg}{\sigma}
\newcommand{\F}{{\mathcal{F}}}
\begin{document}

\title{On the Existence of Martingale Measures in Jump Diffusion Market
Models\thanks{We are grateful to Claudio Fontana, Johannes Ruf and
Ngoc Huy Chau for valuable comments.}}
\author{Jacopo Mancin\\[0.2cm]
Department of Mathematics, LMU University Munich\\mancin.jacopo@gmail.com
 \and Wolfgang J.Runggaldier\\[0.2cm]
Dipartimento di
Matematica Pura ed Applicata, Universit\`a di Padova\\
runggal@math.unipd.it}

\date{}

\maketitle

\begin{abstract} In the context of jump-diffusion market models we
construct examples that satisfy the weaker no-arbitrage condition of
NA1 (NUPBR), but not NFLVR. We show that in these examples the only
candidate for the density process of an equivalent local martingale
measure is a supermartingale that is not a martingale, not even a
local martingale. This candidate is given by the supermartingale
deflator resulting from the inverse of the discounted growth optimal
portfolio. In particular, we consider an example with constraints on
the portfolio that go beyond the standard ones for admissibility.
\vspace{0.2cm}

\noindent {\bf Keywords:} Jump-diffusion markets, weak no arbitrage conditions, martingale deflators, growth optimal portfolio.\vspace{0.1cm}

\noindent \bf{AMS classification:} 91G99, 91B25, 60J75.

\end{abstract}

\section{Introduction}

The First Fundamental Theorem of Asset Pricing states the
equivalence of the classical no-arbitrage concept of {\sl
No-Free-Lunch-With-Vanishing-Risk} (NFLVR) and the existence of an
{\sl equivalent $\sg-$martingale measure} (E$\sg$MM) (see
\cite{DS98}). Since in this paper we shall consider only nonnegative
price processes, we may restrict ourselves to the sub-class of {\sl
equivalent local martingale measures} (ELMM), which can be
characterized by their density processes (see \cite{DS94}). In real
markets some forms of arbitrage however exist that are incompatible
with FLVR. This is taken into account in the recent Stochastic
Portfolio Theory (see the survey in \cite{FK}), where the NFLVR
condition is not imposed as a normative assumption and it is shown
that some arbitrage opportunities may arise naturally in a real
market; furthermore, one of the roles of portfolio optimization is
in fact also that of exploiting possible arbitrages. On the other
hand the full strength of NFLVR is not necessarily needed to solve
fundamental problems of valuation, hedging and portfolio
optimization. In addition, the notion of NFLVR is not robust with
respect to changes in the numeraire or the reference filtration and
it is not easy to check it in real markets. In parallel, there is
also the so-called {\sl Benchmark approach to Quantitative Finance}
(see e.g. \cite{PH}) that aims at developing a theory of valuation
that does not rely on the existence of an ELMM. For the hedging
problem in market models that do not admit an ELMM we refer to
\cite{Ruf}.

Weaker forms of no-arbitrage were thus introduced recently and it
turns out that the equivalent notions of no-arbitrage given by the
{\sl No-Arbitrage of the First Kind} (NA1) (see \cite{K12}, see also
\cite{LW} for an earlier version of this notion) and {\sl No
Unbounded Profit with Bounded Risk} (NUPBR) (see \cite{KK})
represent the minimal condition that still allows one to
meaningfully solve problems of pricing, hedging and portfolio
optimization. It is thus of interest to develop economically
significant market models, which satisfy NA1 (NUPBR), but not NFLVR.
As long as one remains within continuous market models, there may
not exist many models that satisfy NA1 (NUPBR) but not NFLVR. The
classical examples are based on Bessel processes (see e.g.
\cite{DS95}, \cite{PH}, \cite{H}). As hinted at in \cite{CL},
discontinuous market models may offer many more possibilities. To
this effect one may however point out that in \cite{K09} the author
shows that for exponential L\'evy models the various no-arbitrage
notions weaker than NFLVR are all equivalent to NFLVR. This still
leaves open the possibility of investigating the case of
jump-diffusion models, which is the setting that we shall consider
in this paper.

Working under NA1 (NUPBR) we cannot rely upon the existence of an
ELMM, nor upon the corresponding density process. In fact, if there
exists some form of arbitrage beyond NFLVR, then a possible
candidate density of an E$\sg$MM (ELMM) turns out to be a strict
local martingale. The density process can however be generalized to
the notion of an {\sl Equivalent Supermartingale Deflator} (ESMD)
which, considering a finite horizon $[0,T]$, is a process $D_t$ with
$D_0=1,\,D_t\ge 0,\,D_T>0\>P-a.s.$ and such that $D\bar V$ is a
supermartingale for all discounted value processes $\bar V_t$ of a
self-financing admissible portfolio strategy ($D_t$ is thus itself a
supermartingale). Notice that, if there exists an {\sl Equivalent
Supermartingale Measure} (ESMM), namely a measure $Q\sim P$ under
which all discounted self-financing portfolio processes are
supermartingales, then the process
$D_t:=\left(\frac{dQ}{dP}\right)_{\mid\F_t}$ is an ESMD that is
actually a martingale. An ESMD is however not necessarily a density
process, it may not even be a (local) martingale. Instead of
approaching directly the problem of constructing jump-diffusion
market models that satisfy NA1 (NUPBR) but not NFLVR, in this paper
we do it indirectly: for specific jump-diffusion market models that
satisfy NA1 (NUPBR) we construct an ESMD that is not a martingale,
not even a local martingale. We shall then show that for these
models the so constructed ESMD is the only candidate for the density
process of an ELMM (ESMM). If, thus, this only candidate is a
supermartingale that is not a martingale, then there cannot exist an
ELMM (ESMM). Since, furthermore, for these cases we shall show that
the physical measure $P$ cannot be an ELMM, the property of NFLVR
fails to hold.

A basic tool to obtain an ESMD is via the {\sl Growth Optimal
Portfolio} (GOP), which is a portfolio that outperforms any other
self-financing portfolio in the sense that the ratio between the two
processes is a supermartingale. For continuous markets it can in
fact be shown that, under local square integrability of the market
price of risk process, if $\bar V_t^*$ denotes the discounted value
process of the GOP, then $\hat Z_t:=\frac{1}{\bar V_t^*}$ is an ESMD
(see e.g. \cite{FR}). Furthermore, always in continuos markets and
under local square integrability of the market price of risk, the
undiscounted GOP process $V_t^*$ is a {\sl Numeraire Portfolio} in
the sense that self-financing portfolio values, expressed in units
of $V_t^*$, are supermartingales under the physical measure (in the
continuous case they are actually positive local martingales so that
NFLVR holds in the $V_t^*-$discounted market and $P$ itself is an
ELMM). This is however not true in general: the inverse of the
discounted GOP may fail to be even a local martingale (see e.g.
Example 5.1bis in \cite{KS99}). In particular, this may happen when
jumps are present (see e.g. \cite{B} Example 6, and \cite{CL}). On
the other hand, also for general semimartingale models one can show
(see \cite{KK}) the equivalence of \bd\item [i)] Existence of a
numeraire portfolio.
\item[ii)] Existence of an ESMD.
\item[iii)] Validity of NA1 (NUPBR).\ed
Since, when the GOP exists, this GOP is a numeraire portfolio even
if the inverse of its discounted value is not a local martingale
(see e.g. \cite{HS}), by the previous equivalence there exists then
an ESMD.  In this case, namely when the inverse of the discounted
GOP is not a local martingale, we then also have that the physical
measure is not an ELMM when using the GOP as numeraire. In general
this does not exclude that NFLVR may nevertheless hold (see
\cite{Ch}, see also \cite{T}). However, in the jump-diffusion case
we shall show that the process $\hat Z_t$ is the only candidate for
the density process of an ELMM (ESMM). Therefore, in this case the
failure of the inverse of the discounted GOP to be a (local)
martingale excludes the possibility of NFLVR.

In our quest for examples in the jump-diffusion setting, where the
GOP exists and thus NA1 (NUPBR) holds, but the inverse of the
discounted GOP may fail to be a local martingale, we shall start by
studying the existence and the properties of the GOP, thereby
focusing on the characteristics of the market price of risk vector
and showing that the existence of an ELMM (ESMM) depends strictly on
the relationship between the components of this vector and the
corresponding jump intensities. We then prove that the inverse of
the discounted GOP defines the only candidate to be the density
process of an ELMM and provide examples in which the discounted GOP
not only fails to be the inverse of a martingale density process,
but is a supermartingale that is not a local martingale and this
with and without constraints on the portfolio strategies.

The outline of the paper is as follows: In section \ref{S.1} we
describe our jump-diffusion market model and the admissible
strategies. Section \ref{S.2} is devoted to the existence and the
properties of the GOP as well as its relation with ESMDs. Examples
where the discounted GOP fails to be the inverse of a
(super)martingale density process are then discussed in section
\ref{S.3}. The model described in section \ref{S.1} as well as the
contents of subsection \ref{S.2.1} are based on the first part of
Chapter 14 in \cite{PH} (see also \cite{CP}). We recall them here to
make the presentation self-contained thereby giving also some
additional detail that will be needed later in the paper.

\section{The Jump Diffusion Market Model}\label{S.1}

Let there be given a complete probability space
$(\Omega,\mathcal{F},\textit{P})$, with a filtration
$\mathcal{F}=(\mathcal{F}_t)_{t\geq 0}$ satisfying the usual
conditions of right-continuity and completeness. We consider a
market containing $d\in\mathbb{N}$ sources of uncertainty.
Continuous uncertainty is represented by an \textit{m}-dimensional
standard Wiener process
$W=\{W_t=(W^1_t,\dots,W^m_t)^\top,\;t\in[0,\infty)\}$. Event driven
uncertainty on the other hand is modeled by an
(\textit{d-m})-variate point process, identified by the
$\mathcal{F}$-adapted counting process
$N=\{N_t=(N^1_t,\dots,N^{d-m}_t)^\top, \;t\in[0,\infty)\}$, whose
intensity
$\lambda=\{\lambda_t=(\lambda^1_t,\dots,\lambda^{d-m}_t)^\top,\;t\in[0,\infty)\}$
is a given, predictable and strictly positive process satisfying
$\lambda^k_t>0\quad\mbox{and}\quad\int_0^t\lambda^k_s ds<\infty$
almost surely for all $t\in[0,\infty)$ and $k\in\{1,2,\dots,d-m\}$.
We shall denote by $T_n$ the jump times of $N_t$. For each
univariate point process we can define a corresponding jump
martingale $M^k=\{M^k_t,\;t\in[0,\infty)\}$ which, following
\cite{PH}, we define via its stochastic differential \beq
\label{eq:M} dM_t^k=\frac{dN_t^k-\lambda_t^kdt}{\sqrt{\lambda_t^k}},
\eeq for all $t\in[0,\infty)$ and $k\in\{1,2,\dots,d-m\}$. We assume
that $W$ and $N$ are independent, generate all the uncertainty in
the model and that pairwise the $N^k$s do not jump at the same time.

The financial market consists of $d+1$ securities $S^j$, for
$j=0,1,\dots,d$, that model the evolution of wealth due to the
ownership of primary securities, with all income and dividends
reinvested. As usual, the first account $S^0_t$ is assumed to be
locally risk free, which means that it is of finite variation and
the solution of the differential equation
\[
dS^0_t=S^0_t r_t dt
\]
for $t\in[0,\infty)$, with $S^0_0=1$. The remaining assets $S^j$,
for $j=1,\dots,d$, are supposed to be risky and to be the solution
to the jump diffusion SDE \beq \label{eq:assets}
dS^j_t=S^j_{t-}\left(a^j_t dt+\sum_{k=1}^m b_t^{j,k}
dW_t^k+\sum_{k=m+1}^{d} b_t^{j,k} dM_t^{k-m}\right) \qquad\quad
S_0^j>0 \eeq for $t\in[0,\infty)$, where the short rate process $r$,
the appreciation rate processes $a^j$, the generalized volatility
processes $b^{j,k}$ and the intensity processes $\lambda^k$ are
almost surely finite and predictable. Assuming that these processes
are such that a unique strong solution to the system of
SDEs~\eqref{eq:assets} exists, we obtain the following explicit
expression for every $j=1,\dots,d$
\[
\begin{split}
S_t^j&= S_0^j\exp\left\{\int_0^t\left(a_s^j-\frac{1}{2}\sum_{k=1}^m(b_s^{j,k})^2\right)ds
+\sum_{k=1}^m\int_0^tb_s^{j,k}dW_s^k\right\}\cdot\\
&\cdot\prod_{k=m+1}^{d}\left[\exp\left\{-\int_0^tb_s^{j,k}\sqrt{\lambda_s^{k-m}}ds\right\}\prod_{n=1}^{N_t^{k-m}}
\left(\frac{b_{T_n}^{j,k}}{\sqrt{\lambda_{T_n}^{k-m}}}+1\right)\right]
\end{split}
\]
To ensure non-negativity for each primary security account, and
exclude jumps that would lead to negative values for $S_t^j$, we need
to make the following assumption:
\begin{ass} The condition
\[
b_t^{j,k}\geq-\sqrt{\lambda_t^{k-m}}
\]
holds for all $t\in[0,\infty)$, $j\in\{1,2,\dots,d\}$ and $k\in\{m+1,\dots,d\}$.
\label{ass:1}
\end{ass}
In what follows $b_t$ will denote the \textit{generalized
volatility} matrix $[b_t^{j,k}]_{j,k=1}^d$ for all $t\in[0,\infty)$
and we make the further assumption:
\begin{ass} The generalized volatility matrix $b_t$ is invertible for Lebesgue-almost-every $t\in[0,\infty)$.
\label{ass:2}
\end{ass}
The condition stated in Assumption~\ref{ass:2} means that no primary
security account can be formed as a portfolio of other primary
security accounts, i.e. the market does not contain redundant
assets.

We can now introduce the \textit{market price of risk} vector \beq
\label{eq:theta}
\theta_t=(\theta_t^1,\dots,\theta_t^d)^\top=b_t^{-1}[a_t-r_t\textbf{1}],
\eeq for $t\in[0,\infty)$. Here $a_t=(a_t^1,\dots,a_t^d)^\top$ is
the \textit{appreciation rate} vector and
$\textbf{1}=(1,\dots,1)^\top$ is the \textit{unit} vector.
Using~\eqref{eq:theta}, we obtain $a_t=b_t\theta_t+r_t\textbf{1}$ so
that we can rewrite the SDE~\eqref{eq:assets} in the form \beq
\label{eq:rew} dS^j_t=S^j_{t-}\left(r_t dt+\sum_{k=1}^m
b_t^{j,k}(\theta_t^kdt+ dW_t^k)+\sum_{k=m+1}^{d}
b_t^{j,k}(\theta_t^kdt+ dM_t^{k-m})\right), \eeq for
$t\in[0,\infty)$ and $j\in\{1,\dots,d\}$. For $k\in\{1,2,\dots,m\}$,
the quantity $\theta_t^k$ denotes the market price of risk with
respect to the $k$-th Wiener process $W^k$. In a similar way, if
$k\in\{m+1,\dots,d\}$, then $\theta_t^k$ can be interpreted as the
market price of the $(k-m)$-th event risk with respect to the
counting process $N^{k-m}$.

The vector process
$S=\{S_t=(S_t^0,\dots,S_t^d)^\top,\;t\in[0,\infty)\}$ characterizes
the evolution of all primary security accounts. In order to
rigorously describe the activity of trading in the financial market
we now recall the concept of \textit{trading strategy}. We emphasize
that we only consider self-financing trading strategies which
generate \textit{positive portfolio processes}.
\begin{df}  ({\sl Admissible strategies})$\qquad$
\begin{description}
\item[(a)] An admissible trading strategy with initial value $s$ is a $\mathbb{R}^{d+1}$-valued
predictable stochastic process
$\delta=\{\delta_t=(\delta_t^0,\dots,\delta_t^d)^\top,\;t\in[0,\infty)\}$,
where $\delta^j_t$ denotes the number of units of the $j$-th primary
security account held at time $t\in[0,\infty)$ in the portfolio, and
it is such that the It\^{o} integral $\int_0^T\delta^j_tdS^j_t$ is
well-defined for any $T>0$ and $j\in\{0,1,\dots,d\}$. Furthermore,
the value of the corresponding portfolio process at time $t$, which
as in \cite{PH} we denote by
$S_t^{s,\delta}=\sum_{j=0}^d\delta_t^jS_t^j$ with
$S_0^{s,\delta}=s>0$, is nonnegative for any $t\in[0,\infty)$. We
let $S_t^{\delta}:=S_t^{1,\delta}$ so that
$S_t^{s,\delta}=s\,S_t^{\delta}$. Since the strategy is
self-financing, we have also
$dS_t^{s,\delta}=\sum_{j=0}^d\delta_t^jdS_t^j. $
\item[(b)] For any admissible trading strategy $\delta$ the value of the corresponding discounted portfolio process
at time $t$ is defined as $\bar{S}_t^{s,\delta}=\frac{S_t^{s,\delta}}{S_t^0}$.
\end{description}
\end{df}

For a given strategy $\delta$ with strictly positive portfolio
process $S^{s,\delta}$ we denote as usual by $\pi_{\delta,t}^j$ the
fraction of wealth invested in the $j$-th primary security account
at time $t$, that is
$\pi_{\delta,t}^j=\delta_t^j\frac{S_{t-}^j}{S_{t-}^{s,\delta}}$ for
$t\in[0,\infty)$ and $j\in\{0,1,\dots,d\}$. In terms of the vector
of fractions
$\pi_{\delta,t}=(\pi_{\delta,t}^0,\dots,\pi_{\delta,t}^d)^\top$ we
obtain from~\eqref{eq:rew}, the self-financing property, and taking~\eqref{eq:M} into account, the following SDE for $S_t^{s,\delta}$
\beq \label{eq:conti}
\begin{split}
dS_t^{s,\delta}=S_{t-}^{s,\delta}\Biggl\{&\left(r_t+\sum_{k=1}^m\sum_{j=1}^d\pi_{\delta,t-}^jb_t^{j,k}\theta_t^k
+\sum_{k=m+1}^{d}\sum_{j=1}^d\pi_{\delta,t-}^jb_t^{j,k}\left(\theta_t^k-\sqrt{\lambda_t^{k-m}}\right)\right)dt\\
&+\sum_{k=1}^m\sum_{j=1}^d\pi_{\delta,t}^jb_t^{j,k}dW_t^k+\sum_{k=m+1}^{d}\sum_{j=1}^d\pi_{\delta,t}^j\frac{b_t^{j,k}}
{{\sqrt{\lambda_t^{k-m}}}}dN_t^{k-m}\Biggl\}
\end{split}
\eeq
Therefore, the value at time $t$ of the portfolio $S^{s,\delta}$ is
\beq
\label{eq:port}
\begin{split}
S_t^{s,\delta}&=s\exp\Biggl\{\int_0^t\left(r_s+\sum_{k=1}^d\sum_{j=1}^d\pi_{\delta,s}^jb_s^{j,k}
\theta_s^k-\frac{1}{2}\sum_{k=1}^m\left(\sum_{j=1}^d\pi_{\delta,s}^jb_s^{j,k}\right)^2\right)ds\\
&\;\;+\sum_{k=1}^m\int_0^t\sum_{j=1}^d\left(\pi_{\delta,s}^jb_s^{j,k}\right)dW_s^k\Biggl\}\cdot
\prod_{k=m+1}^d\Biggl[\exp\left\{-\int_0^t\sum_{j=1}^d\pi_{\delta,s}^jb_s^{j,k}\sqrt{\lambda_s^{k-m}}ds\right\}\\
&\;\;\cdot\prod_{n=1}^{N_t^{k-m}}\left(\frac{\sum_{j=1}^d\pi_{\delta,T_n}^jb_{T_n}^{j,k}}{\sqrt{\lambda_{T_n}^{k-m}}}+1\right)\Biggl]
\end{split}
\eeq From the last term on the right hand side in~\eqref{eq:port} it
is immediately seen that a portfolio process remains strictly
positive if and only if \beq \label{eq:ammiss} \sum_{j=1}^d
\pi_{\delta,t}^jb_t^{j,k}>-\sqrt{\lambda_t^{k-m}}\quad\mbox{a.s.}
\eeq for all $k\in\{m+1,\dots,d\}$ and $t\in[0,\infty)$. This
condition is guaranteed by Assumption \ref{ass:1}.

\section{The Growth Optimal Portfolio (GOP)}\label{S.2}

\subsection{Derivation of the GOP and its dynamics}\label{S.2.1}

We start from
\begin{df} For an admissible trading strategy $\delta$, leading to a strictly positive portfolio process,
the {\bf growth rate process} $g^\delta=(g^\delta_t)_{t\geq0}$ is
defined as the drift term in the SDE satisfied by the process
$\log{S^\delta}=(\log{S^\delta_t})_{t\geq0}$. An admissible trading
strategy $\delta^*$ (and the corresponding portfolio process
$S^{\delta^*}$) is said to be growth-optimal if $g_t^{\delta^*}\geq
g_t^\delta$ P-a.s. for all $t\in[0,\infty)$ for any admissible
trading strategy $\delta$. We shall use the acronym $GOP$ to denote
the growth-optimal portfolio.\end{df}

By applying It\^o's formula and suitably adding and subtracting
terms (see \cite{M}) one immediately obtains the general expression
for $g_t^\delta$, namely
\begin{lm} For any admissible trading strategy $\delta$, the SDE satisfied by $\log S^{\delta}$ is
\beq \label{eq:g} d\log{S_t^\delta}=g_t^\delta
dt+\sum_{k=1}^m\sum_{j=1}^d\pi_{\delta,t}^jb_t^{j,k}dW_t^k
+\sum_{k=m+1}^d\log{\left(1+\sum_{j=1}^d\pi_{\delta,t}^j\frac{b_t^{j,k}}{\sqrt{\lambda_t^{k-m}}}\right)}\sqrt{\lambda_t^{k-m}}dM_t^{k-m}
\eeq where $g_t^\delta$ is the growth rate given by \beq
\label{eq:gg}
\begin{split}
g_t^\delta &=r_t+\sum_{k=1}^m\left[\sum_{j=1}^d\pi_{\delta,t}^j b_t^{j,k}\theta_t^k-
\frac{1}{2}\left(\sum_{j=1}^d\pi_{\delta,t}^j b_t^{j,k}\right)^2\right] \\
&+\sum_{k=m+1}^d\left[\sum_{j=1}^d\pi_{\delta,t}^j
b_t^{j,k}\left(\theta_t^k-\sqrt{\lambda_t^{k-m}}\right)
+\log{\left(1+\sum_{j=1}^d\pi_{\delta,t}^j\frac{b_t^{j,k}}{\sqrt{\lambda_t^{k-m}}}\right)}\lambda_t^{k-m}\right]
\end{split}
\eeq
for $t\in[0,\infty)$.
\label{lm:lma}
\end{lm}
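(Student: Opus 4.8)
The plan is to derive \eqref{eq:g}--\eqref{eq:gg} by applying It\^o's formula for semimartingales with jumps to $\log S_t^\delta$, starting from the self-financing dynamics \eqref{eq:conti} of $S^{s,\delta}$ specialised to $s=1$ (so that $S_t^\delta=S_t^{1,\delta}$). It is convenient to write \eqref{eq:conti} compactly as $dS_t^\delta=S_{t-}^\delta\,dR_t$, where the continuous part of the return process $R$ is $dR_t^c=\mu_t\,dt+\sum_{k=1}^m\sum_{j=1}^d\pi_{\delta,t}^j b_t^{j,k}\,dW_t^k$ with $\mu_t$ the $dt$-coefficient appearing in \eqref{eq:conti}, while a jump of $N^{k-m}$ at a time $T_n$ makes $R$ jump by $\Delta R_{T_n}=\sum_{j=1}^d\pi_{\delta,T_n}^j b_{T_n}^{j,k}/\sqrt{\lambda_{T_n}^{k-m}}$. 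Admissibility \eqref{eq:ammiss} guarantees $1+\Delta R_t>0$, so that $\log S^\delta$ is well defined and $\Delta\log S_t^\delta=\log(1+\Delta R_t)$.

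First I would isolate the continuous contribution. It\^o's formula applied to $x\mapsto\log x$ produces the first-order term $dR_t^c$ together with the second-order correction $-\frac{1}{2}(S_{t-}^\delta)^{-2}\,d\langle (S^\delta)^c\rangle_t=-\frac{1}{2}\sum_{k=1}^m(\sum_{j=1}^d\pi_{\delta,t}^j b_t^{j,k})^2\,dt$, using the independence of $W^1,\dots,W^m$ and the fact that the jump terms do not contribute to the continuous quadratic variation. Together with the $r_t$ and $\sum_{k=1}^m\sum_{j=1}^d\pi_{\delta,t}^j b_t^{j,k}\theta_t^k$ parts of $\mu_t$, this already yields the Wiener integral in \eqref{eq:g} and the first bracket of \eqref{eq:gg} (together with $r_t$).

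Next I would handle the jumps. Since by assumption no two of the $N^k$ jump simultaneously, the purely discontinuous part of $\log S^\delta$ is $\sum_{k=m+1}^d\log(1+\sum_{j=1}^d\pi_{\delta,t}^j b_t^{j,k}/\sqrt{\lambda_t^{k-m}})\,dN_t^{k-m}$. Rewriting $dN_t^{k-m}=\sqrt{\lambda_t^{k-m}}\,dM_t^{k-m}+\lambda_t^{k-m}\,dt$ by means of \eqref{eq:M} splits each summand into the martingale integral $\log(1+\cdots)\sqrt{\lambda_t^{k-m}}\,dM_t^{k-m}$ appearing in \eqref{eq:g} and the compensator drift $\log(1+\cdots)\,\lambda_t^{k-m}\,dt$; adding the latter to the remaining $k\geq m+1$ part $\sum_{j=1}^d\pi_{\delta,t}^j b_t^{j,k}(\theta_t^k-\sqrt{\lambda_t^{k-m}})\,dt$ of $\mu_t$ reproduces the second bracket of \eqref{eq:gg}. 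Collecting all $dt$-terms defines $g_t^\delta$, and what remains is the local martingale part of \eqref{eq:g}.

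The points that require care, but which I would state rather than belabour, are the following: that the two local martingale integrals are genuinely local martingales, i.e.\ that $\int_0^t(\sum_{j=1}^d\pi_{\delta,s}^j b_s^{j,k})^2\,ds<\infty$ and $\int_0^t\log^2(1+\sum_{j=1}^d\pi_{\delta,s}^j b_s^{j,k}/\sqrt{\lambda_s^{k-m}})\,\lambda_s^{k-m}\,ds<\infty$ almost surely (which follows from predictability and a.s.\ finiteness of the coefficients, together with there being only finitely many jumps on $[0,t]$); that the distinction between the predictable integrand $\pi_{\delta,t-}$ entering the drift and compensator terms and the integrand $\pi_{\delta,t}$ entering the stochastic integrals is immaterial for $dt$- and compensator-integrals; and that the covariation between the continuous martingale part and the jump part vanishes. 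None of these is a genuine obstacle; the one real subtlety is the consistent bookkeeping, namely keeping track of which portions of the drift in \eqref{eq:conti} get reabsorbed into the compensator contribution produced by the jump terms.
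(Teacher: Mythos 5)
Your proposal is correct and follows exactly the route the paper intends: the paper gives no detailed proof, only the remark that the lemma follows ``by applying It\^o's formula and suitably adding and subtracting terms,'' and your computation (It\^o on $\log S^\delta$ starting from \eqref{eq:conti}, splitting off the continuous quadratic-variation correction, and recompensating $dN_t^{k-m}=\sqrt{\lambda_t^{k-m}}\,dM_t^{k-m}+\lambda_t^{k-m}\,dt$) is precisely that argument spelled out. The bookkeeping reproduces \eqref{eq:g} and \eqref{eq:gg} exactly, and your side remarks on integrability and the vanishing covariation between the continuous and jump parts cover the only technical points worth mentioning.
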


In order to obtain the GOP dynamics we now maximize separately the
two sums on the right hand side of~\eqref{eq:gg} with respect to the
\textit{portfolio volatilities} $c_t^k:=\sum_{j=1}^d\pi_{\delta,t}^j
b_t^{j,k}$ for $k\in\{1,\dots,d\}$. Note that for the first sum a
unique maximum exists, because it is a negative definite quadratic
form with respect to the portfolio volatilities. In order to
guarantee the existence and the uniqueness of a maximum also in the
second sum, we have to impose the following condition
\begin{ass} The intensities and the market price of event risk components satisfy
\[
\sqrt{\lambda_t^{k-m}}>\theta_t^k
\]
for all $t\in[0,\infty)$ and $k\in\{m+1,\dots,d\}$.
\label{ass:test}
\end{ass}
This is because the first derivative of \beq \label{eq:varg}
c^k_t\left(\theta_t^k-\sqrt{\lambda_t^{k-m}}\right)+\log{\left(1+\frac{c^k_t}{\sqrt{\lambda_t^{k-m}}}\right)}\lambda_t^{k-m}
\eeq with respect to $c_t^k$, which is
$\left(\theta_t^k-\sqrt{\lambda_t^{k-m}}\right)+\frac{\lambda_t^{k-m}}{\sqrt{\lambda_t^{k-m}}+c^k_t}$,
is positive for all $c^k_t>-\sqrt{\lambda_t^{k-m}}$ if
Assumption~\ref{ass:test} does not hold. These, by virtue
of~\eqref{eq:ammiss}, are precisely all the possible values of the
\textit{event driven volatility} $\sum_{j=1}^d\pi_{\delta,t}^j
b_t^{j,k}$, $k\in\{m+1,\dots,d\}$. Therefore if
Assumption~\ref{ass:test} fails to hold there will not exist an
optimal growth rate, since~\eqref{eq:varg} tends to infinity as
$c_t^k\to\infty$, for any $k\in\{m+1,\dots,d\}$.

This condition allows us to introduce the predictable vector process
$c_t^*=(c_t^{*1},\dots,c_t^{*d})^\top$ which describes the optimal
generalized portfolio volatilities. For the components with
$k\in\{1,\dots,m\}$, we get from the first order condition that
identifies the maximum growth rate the following
\[
\theta_t^k-c_t^k=0\Longleftrightarrow c_t^k=\theta_t^k
\]
For the last ($d-m$) components we get, again from the first order
condition, noticing that the function of $c_t^k$ in~\eqref{eq:varg}
is strictly concave and that we must have
$c_t^k>-\sqrt{\lambda_t^{k-m}}$,
\[
\left(\theta_t^k-\sqrt{\lambda_t^{k-m}}\right)+\frac{\lambda_t^{k-m}}{\sqrt{\lambda_t^{k-m}}+c^k_t}=0
\Longleftrightarrow
c_t^k=\frac{\theta_t^k}{1-\theta_t^k(\lambda_t^{k-m})^{-\frac{1}{2}}}
\]
Therefore the vector $c_t^*$ has the following representation
\beq
\label{eq:c}
c_t^{*k}=
\begin{cases}
\;\;\;\;\;\;\;\;\;\theta_t^k & \qquad\text{for} \quad k\in\{1,2,\dots,m\}\\
\frac{\theta_t^k}{1-\theta_t^k(\lambda_t^{k-m})^{-\frac{1}{2}}}
&\qquad \text{for} \quad k\in\{m+1,\dots,d\}
\end{cases}
\eeq for $t\in[0,\infty)$. Note that a very large jump intensity
with $\lambda_t^{k-m}\gg1$ or
$\frac{\theta_t^k}{\sqrt{\lambda_t^{k-m}}}\ll1$ causes the
corresponding component $c_t^{*k}$ to approach the market price of
jump risk $\theta_t^k$ asymptotically for given $t\in[0,\infty)$ and
$k\in\{m+1,\dots,d\}$. In this case the structure of the components
$c_t^{*k}\approx\theta_t^k$ for $k\in\{m+1,\dots,d\}$ is similar to
those obtained with respect to the Wiener processes. Intuitively
this is because, when jumps occur more and more frequently, almost
continuously, the jump martingales $M^k$ become nearly
indistinguishable from the continuous ones.

The above considerations lead immediately to the following (see also
Corollary 14.1.5 in \cite{PH})
\begin{lm} Under the Assumptions~\ref{ass:1},~\ref{ass:2} and~\ref{ass:test} the fractions
\beq
\label{eq:fract}
\pi_{\delta_*,t}=(\pi_{\delta_*,t}^1,\dots,\pi_{\delta_*,t}^d)=(c_t^{*\top}b_t^{-1})^\top
\eeq
determine uniquely the GOP and the corresponding portfolio process $S^{\delta_*}=\{S^{\delta_*}_t,\;t\in[0,\infty)\}$ satisfies the SDE
\beq
\label{eq:gop}
\begin{split}
dS^{\delta_*}_t= S^{\delta_*}_{t-}\Biggl(&r_t dt +\sum_{k=1}^m\theta_t^k(\theta_t^kdt+dW_t^k)\\
+&\sum_{k=m+1}^d\frac{\theta_t^k}{1-\theta_t^k(\lambda_t^{k-m})^{-\frac{1}{2}}}(\theta_t^kdt+dM_t^{k-m})\Biggl),
\end{split}
\eeq for $t\in[0,\infty)$, with $S_0^{\delta_*}>0$. Note that
Assumption~\ref{ass:test} guarantees that the portfolio process
$S^{\delta_*}$ is strictly positive.\end{lm}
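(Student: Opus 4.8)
The plan is to reduce the maximization of the growth rate over admissible strategies to an elementary, coordinate-wise concave maximization over the \emph{portfolio volatilities} $c_t^k=\sum_{j=1}^d\pi_{\delta,t}^jb_t^{j,k}$, $k\in\{1,\dots,d\}$, solve it in closed form, and translate the answer back through $b_t$. Fix $(\omega,t)$. By Lemma~\ref{lm:lma} the growth rate~\eqref{eq:gg} depends on $\delta$ only through the vector $c_t=b_t^\top\pi_{\delta,t}$, and by Assumption~\ref{ass:2} the linear map $\pi\mapsto b_t^\top\pi$ is, for Lebesgue-a.e.\ $t$, a bijection of $\mathbb R^d$. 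Moreover, by~\eqref{eq:ammiss} the admissibility requirement that the portfolio be strictly positive is equivalent to $c_t^k>-\sqrt{\lambda_t^{k-m}}$ for $k\in\{m+1,\dots,d\}$ and imposes no restriction on $c_t^k$ for $k\le m$. Hence maximizing $g_t^\delta$ over admissible $\delta$ amounts to maximizing, separately in each coordinate, the scalar function $\phi_k(c)=c\,\theta_t^k-\tfrac12 c^2$ over $c\in\mathbb R$ for $k\le m$, and $\psi_k(c)=c(\theta_t^k-\sqrt{\lambda_t^{k-m}})+\lambda_t^{k-m}\log\!\big(1+c/\sqrt{\lambda_t^{k-m}}\big)$ over $c>-\sqrt{\lambda_t^{k-m}}$ for $k>m$.

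For the second step I would carry out the pointwise optimization. The function $\phi_k$ is a strictly concave parabola with unique maximizer $c=\theta_t^k$. The function $\psi_k$ has $\psi_k''(c)=-\lambda_t^{k-m}/(\sqrt{\lambda_t^{k-m}}+c)^2<0$ on its domain, hence is strictly concave there; its stationary point, obtained from $\psi_k'(c)=(\theta_t^k-\sqrt{\lambda_t^{k-m}})+\lambda_t^{k-m}/(\sqrt{\lambda_t^{k-m}}+c)=0$, is $c=\theta_t^k/\big(1-\theta_t^k(\lambda_t^{k-m})^{-1/2}\big)$, and it belongs to the admissible half-line $(-\sqrt{\lambda_t^{k-m}},\infty)$ precisely when $\sqrt{\lambda_t^{k-m}}>\theta_t^k$, i.e.\ under Assumption~\ref{ass:test} (if this fails, $\psi_k'$ is strictly positive throughout and no maximizer exists, as already observed before Assumption~\ref{ass:test}); strict concavity then gives uniqueness. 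Collecting the coordinates yields exactly the vector $c_t^*$ of~\eqref{eq:c}, which is predictable since $\theta$ and $\lambda$ are predictable and~\eqref{eq:c} is a continuous transformation of them on the relevant domain. Because the pointwise maximizer is given by this explicit formula, no measurable-selection argument is needed: $c^*$ is a legitimate choice of portfolio volatilities whose growth rate dominates that of every admissible $\delta$ for every $t$, P-a.s.

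Finally I would translate back and check positivity. Inverting $c=b_t^\top\pi$ gives the GOP fractions $\pi_{\delta_*,t}=(b_t^\top)^{-1}c_t^*=(c_t^{*\top}b_t^{-1})^\top$ as in~\eqref{eq:fract}; they are unique by Steps~1--2, predictable by Step~2, and (using the standing strong-existence hypothesis for~\eqref{eq:assets}) define an admissible strategy $\delta_*$. Substituting $c_t=c_t^*$ into the portfolio dynamics~\eqref{eq:conti} and rewriting $dN_t^{k-m}$ as $\sqrt{\lambda_t^{k-m}}\,dM_t^{k-m}+\lambda_t^{k-m}\,dt$ via~\eqref{eq:M} collapses the drift to $r_t+\sum_{k=1}^d c_t^{*k}\theta_t^k$ and reorganizes the diffusion and jump-martingale terms into the form~\eqref{eq:gop}. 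For strict positivity, by~\eqref{eq:port}/\eqref{eq:ammiss} it suffices that $c_t^{*k}>-\sqrt{\lambda_t^{k-m}}$ for all $k>m$; writing $c_t^{*k}=\theta_t^k\sqrt{\lambda_t^{k-m}}/(\sqrt{\lambda_t^{k-m}}-\theta_t^k)$ and using $\sqrt{\lambda_t^{k-m}}-\theta_t^k>0$ from Assumption~\ref{ass:test}, this inequality reduces to $0>-\sqrt{\lambda_t^{k-m}}$, which always holds; this is exactly where Assumption~\ref{ass:test} enters the positivity claim.

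I expect the only real subtlety to be the reduction in Step~1 — verifying that the constraint set for $\pi$ corresponds to the simple box constraint for $c$, that invertibility of $b_t$ allows one to optimize over $c$ instead of $\pi$, and that the separated scalar problems are genuinely decoupled — together with confirming that the explicit maximizer yields a bona fide admissible strategy (predictability, positivity, and well-definedness of the It\^o integrals). The remaining manipulations are routine calculus and linear algebra.
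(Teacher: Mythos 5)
Your argument is correct and is essentially the paper's own: the lemma is stated as an immediate consequence of the preceding coordinate-wise concave maximization over the portfolio volatilities $c_t^k=\sum_j\pi_{\delta,t}^jb_t^{j,k}$, the first-order conditions giving~\eqref{eq:c}, and inversion through $b_t$ via Assumption~\ref{ass:2}. Your additional explicit checks (that the box constraint on $c$ matches~\eqref{eq:ammiss}, that the drift collapses to $r_t+\sum_k c_t^{*k}\theta_t^k$, and that $c_t^{*k}>-\sqrt{\lambda_t^{k-m}}$ under Assumption~\ref{ass:test}) are all accurate.
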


By~\eqref{eq:gg},~\eqref{eq:c} and~\eqref{eq:fract} we obtain the
optimal growth rate of the GOP in the form
\[
\begin{split}
g_t^{\delta_*} &=r_t+\sum_{k=1}^m\left[\left(\theta_t^k\right)^2-\frac{1}{2}\left(\theta_t^k\right)^2\right]
+\sum_{k=m+1}^d\Biggl[\frac{\theta_t^k}{1-\theta_t^k(\lambda_t^{k-m})^{-\frac{1}{2}}}\left(\theta_t^k-\sqrt{\lambda_t^{k-m}}\right)\\
&\;+\log{\left(1+\frac{\theta_t^k}{1-\theta_t^k(\lambda_t^{k-m})^{-\frac{1}{2}}}\frac{1}{\sqrt{\lambda_t^{k-m}}}\right)}\lambda_t^{k-m}\Biggl]\\
&=r_t+\frac{1}{2}\sum_{k=1}^m\left(\theta_t^k\right)^2+\sum_{k=m+1}^d\lambda_t^{k-m}\left(\log{\left(1+\frac{\theta_t^k}
{\sqrt{\lambda_t^{k-m}}-\theta_t^k}\right)}-\frac{\theta_t^k}{\sqrt{\lambda_t^{k-m}}}\right)
\end{split}
\]
for $t\in[0,\infty)$.

\subsection{GOP and Martingale Deflators}\label{S.2.2}
In what follows we consider a fixed finite time horizon $T<\infty$
and investigate whether the model introduced above represents a
viable financial market, in particular we shall check whether
properly defined arbitrage opportunities are excluded. First we
recall the following
\begin{df}\label{ESMM}
An {\sl equivalent local martingale measure} (ELMM) is a measure
$Q\sim P$ such that all price processes, expressed in units of the
risk-free asset (i.e. discounted prices), are local martingales.
Analogously, $Q\sim P$ is an {\sl equivalent supermartingale
measure} (ESMM) if the price processes, expressed in units of the
risk-free asset, are supermartingales.\end{df} From \cite{K12} we
also recall the
\begin{df} An $\mathcal{F}_T$-measurable nonnegative random variable $\xi$ is called arbitrage of
the first kind if $P(\xi>0)>0$ and, for all initial values $s\in(0,\infty)$, there exists an
admissible trading strategy $\delta$ such that $\bar{S}_T^{s,\delta}\geq\xi$ $P$-a.s.
We say that the financial market is viable if there are no arbitrages of the first kind, i.e. the condition NA1 holds.
\end{df}
We next show that a sufficient condition for the absence of
arbitrages of the first kind is the existence of a
\textit{supermartingale deflator} which we describe as
\begin{df} An equivalent supermartingale deflator (ESMD) is a real-valued nonnegative adapted process $D=(D_t)_{0\leq t\leq T}$
with $D_0=1$ and $D_T>0$ $P$-a.s. and such that the process
$D\bar{S}^\delta=(D_t\bar{S}^\delta_t)_{0\leq t\leq T}$ is a
supermartingale for every admissible trading strategy $\delta$. We
denote by $\mathcal{D}$ the set of all supermartingale deflators.
(By taking $\delta\equiv (1,0,\cdots,0)$, one has that $D$ is itself
a supermartingale).\label{df:deflator}
\end{df}

\begin{prop} If $\mathcal{D}\neq\emptyset$ then there cannot exist arbitrages of the first kind.
\end{prop}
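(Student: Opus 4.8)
The plan is to argue by contradiction. Suppose $\mathcal{D}\neq\emptyset$, so we may fix an ESMD $D=(D_t)_{0\le t\le T}$, and suppose that there nevertheless exists an arbitrage of the first kind, i.e.\ an $\mathcal{F}_T$-measurable random variable $\xi\ge 0$ with $P(\xi>0)>0$ such that for every initial value $s\in(0,\infty)$ one can find an admissible strategy $\delta=\delta^{(s)}$ with $\bar S_T^{s,\delta}\ge\xi$ $P$-a.s. The idea is to send $s\downarrow 0$ along a sequence and use the supermartingale property of $D\bar S^{\delta}$ to bound $E[D_T\xi]$ by something proportional to $s$, forcing $E[D_T\xi]=0$; since $D_T>0$ $P$-a.s. this yields $\xi=0$ $P$-a.s., contradicting $P(\xi>0)>0$.

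In more detail, the first step is to observe that for any $s>0$ and the associated strategy $\delta^{(s)}$, the process $D\bar S^{s,\delta^{(s)}}=s\,D\bar S^{\delta^{(s)}}$ is a supermartingale (by Definition~\ref{df:deflator}, applied to $\delta^{(s)}$, together with the linearity $\bar S^{s,\delta}=s\bar S^{\delta}$). Hence
\[
E\!\left[D_T\,\bar S_T^{s,\delta^{(s)}}\right]\;\le\;D_0\,\bar S_0^{s,\delta^{(s)}}\;=\;s,
\]
using $D_0=1$ and $\bar S_0^{s,\delta^{(s)}}=s$ (recall $S_0^0=1$, so discounting is trivial at time $0$). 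The second step: since $\bar S_T^{s,\delta^{(s)}}\ge\xi\ge 0$ $P$-a.s.\ and $D_T\ge 0$, monotonicity of expectation gives $E[D_T\,\xi]\le E[D_T\,\bar S_T^{s,\delta^{(s)}}]\le s$. The third step is simply to let $s\to 0$: the left-hand side does not depend on $s$, so $E[D_T\,\xi]=0$. Since $D_T>0$ $P$-a.s.\ and $\xi\ge 0$, the product $D_T\xi$ is nonnegative with zero expectation, hence $D_T\xi=0$ $P$-a.s., hence $\xi=0$ $P$-a.s., contradicting $P(\xi>0)>0$. Therefore no arbitrage of the first kind can exist.

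There is essentially no hard analytic obstacle here; the argument is a clean consequence of the supermartingale property at the single pair of times $0$ and $T$. The only points requiring a little care are bookkeeping ones: making sure the supermartingale property transfers from $\bar S^{\delta}$ to the scaled process $\bar S^{s,\delta}=s\bar S^{\delta}$ (immediate by linearity), and making sure we are entitled to compare $\bar S_T^{s,\delta^{(s)}}$ with $\xi$ under the expectation (immediate since $D_T\ge 0$ and the inequality $\bar S_T^{s,\delta^{(s)}}\ge\xi$ holds $P$-a.s.). One may also remark, as in Definition~\ref{df:deflator}, that taking $\delta\equiv(1,0,\dots,0)$ shows $D$ itself is a supermartingale, but that fact is not even needed for this proposition. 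No integrability assumptions beyond those built into the definition of an admissible strategy and of an ESMD are required, because the supermartingale inequality already encodes the relevant integrability.
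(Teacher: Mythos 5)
Your proof is correct and follows essentially the same argument as the paper's: fix $D\in\mathcal{D}$, use the supermartingale inequality between times $0$ and $T$ to get $E[D_T\xi]\le s$, and let the initial capital tend to zero (the paper takes $s=1/n$) to conclude $E[D_T\xi]=0$ and hence $\xi=0$ a.s.\ since $D_T>0$. No issues.
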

\begin{proof}
(adapted from~\cite{FR}). Let $D\in\mathcal{D}$ and suppose that
there exists a random variable $\xi$ yielding an arbitrage of the
first kind. Then, for every $n\in\mathbb{N}$, there exists an
admissible trading strategy $\delta^n$ such that,
$\bar{S}_T^{1/n,\delta^n}\geq\xi$ $P$-a.s. For every
$n\in\mathbb{N}$, the process
$D\bar{S}^{1/n,\delta^n}=(D_t\bar{S}^{1/n,\delta^n}_t)_{0\leq t\leq
T}$ is a supermartingale. So, for every $n\in\mathbb{N}$ one has
$E[D_T\xi]\leq E[D_T\bar{S}^{1/n,\delta^n}_T]\leq
E[D_0\bar{S}^{1/n,\delta^n}_0]=\frac{1}{n}.$ Letting $n\to\infty$
gives $E[D_T\xi]=0$ and hence $D_T\xi=0$ $P$-a.s. Since, due to
Definition~\ref{df:deflator}, we have $D_T>0$ $P$-a.s. this implies
that $\xi=0$ $P$-a.s., which contradicts the assumption that $\xi$
is an arbitrage of the first kind.
\end{proof}
In the remaining part of this section we will derive a fundamental
property of the GOP. Dealing with GOP-denominated portfolio
processes, following \cite{PH} we first introduce the following
notation.
\begin{df}\label{benchm}  For any portfolio process $S^\delta$, the process
$\hat{S}^\delta=\left(\hat{S}_t^\delta\right)_{0\leq t\leq T}$, defined as
$\hat{S}_t^\delta:=S^\delta_t/S_t^{\delta_*}$ for $t\in[0,T]$, is called {\sl Benchmarked portfolio process}.
\end{df}
\begin{rem}
We shall often refer to the inverse of the discounted GOP as
$\hat{Z}_t:=\frac{1}{\bar{S}_t^{\delta_*}}.$
\end{rem}
We begin with the following proposition.
\begin{prop} Under Assumptions~\ref{ass:1},~\ref{ass:2} and~\ref{ass:test} the discounted GOP process
$\bar{S}^{\delta_*}=\{\bar{S}^{\delta_*}_t,\;t\in[0,T]\}$ is the inverse of a supermartingale deflator $D$.
Equivalently, $\hat{Z}_t$ is a supermartingale deflator.
\label{proposition:num}
\end{prop}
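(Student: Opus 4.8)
The plan is to show directly that $\hat{Z}_t\bar{S}^{\delta}_t$ is a supermartingale for every admissible strategy $\delta$, and that $\hat{Z}$ satisfies the normalization and positivity conditions of Definition~\ref{df:deflator}. The positivity $\hat{Z}_T>0$ $P$-a.s. and $\hat{Z}_0=1$ are immediate from the fact that $\bar{S}^{\delta_*}$ is strictly positive (guaranteed by Assumption~\ref{ass:test}, as noted after~\eqref{eq:gop}) and from $S^{\delta_*}_0>0$ with the convention $S^{\delta_*}_0=1$. So the whole content is the supermartingale property of the benchmarked processes $\hat{S}^\delta = \bar{S}^\delta/\bar{S}^{\delta_*} = \hat{Z}\,\bar{S}^\delta$.

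First I would write down the SDE for $\hat{S}^\delta_t = S^\delta_t / S^{\delta_*}_t$ by applying It\^o's formula for jump processes to the ratio, using~\eqref{eq:conti} for the numerator and~\eqref{eq:gop} for the denominator. The discount factor $S^0_t$ cancels in the ratio, so the short rate $r_t$ disappears. The key computation is to collect the drift term of $d\hat{S}^\delta_t / \hat{S}^\delta_{t-}$ and show it is $\le 0$. For the continuous (Wiener) part, the drift contribution is of the form $\sum_{k=1}^m c_t^k\theta_t^k - \sum_{k=1}^m \theta_t^k c_t^k - \big(\text{cross terms from the quotient rule}\big)$; writing $c_t^k = \sum_j \pi^j_{\delta,t} b_t^{j,k}$ and recalling $c_t^{*k}=\theta_t^k$, these combine into $-\sum_{k=1}^m \theta_t^k(c_t^k-\theta_t^k) - $ no, more precisely into a term $-\sum_{k=1}^m(c_t^k-\theta_t^k)\theta_t^k$ plus $\sum$ of the It\^o correction $\theta_t^k(\theta_t^k-c_t^k)$, which should telescope: the Brownian drift of the ratio vanishes identically once one is careful, reflecting that in the continuous case the benchmarked processes are local martingales. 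For the jump part, each $k\in\{m+1,\dots,d\}$ contributes a finite-variation ($dt$) piece coming from the compensators of $dN^{k-m}_t$ in both numerator and denominator, plus the $dt$ pieces hidden in $\theta^k_t\,dt$ inside the $dM$ terms. After assembling, the jump-$k$ drift of $d\hat S^\delta_t/\hat S^\delta_{t-}$ should reduce to
\[
\lambda_t^{k-m}\left(\frac{1+c_t^k/\sqrt{\lambda_t^{k-m}}}{1+c_t^{*k}/\sqrt{\lambda_t^{k-m}}}-1\right)-\big(c_t^k-c_t^{*k}\big)\cdot(\text{something}),
\]
and using the explicit form $c_t^{*k}=\theta_t^k/(1-\theta_t^k(\lambda_t^{k-m})^{-1/2})$, equivalently $1+c_t^{*k}/\sqrt{\lambda_t^{k-m}} = 1/(1-\theta_t^k/\sqrt{\lambda_t^{k-m}})$, one checks that the whole jump-$k$ drift equals $-\lambda_t^{k-m}\big(g(c_t^k)-g(c_t^{*k})\big)$-type expression that is $\le 0$ because $c_t^{*k}$ maximizes~\eqref{eq:varg}. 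The cleanest way to see nonpositivity is: the drift of $\log \hat S^\delta_t$ is exactly $g^\delta_t - g^{\delta_*}_t \le 0$ by Lemma~\ref{lm:lma} and the definition of the GOP, and then $\hat S^\delta_t = \exp(\log \hat S^\delta_t)$ has a drift that is the drift of $\log \hat S^\delta$ plus the usual convexity corrections, which precisely recombine (this is the reverse of the adding-and-subtracting in Lemma~\ref{lm:lma}) into a nonpositive quantity; I would phrase the argument this way to avoid a brute-force It\^o expansion.

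Once the drift is shown to be $\le 0$, $\hat S^\delta$ is a nonnegative local supermartingale (the local-martingale part being the stochastic integrals against $dW^k$ and $dM^{k-m}$, which are well-defined by the admissibility hypotheses on $\delta$); a nonnegative local supermartingale is a genuine supermartingale by Fatou's lemma applied along a localizing sequence. This yields $D_t \bar S^\delta_t = \hat S^\delta_t$ supermartingale for all admissible $\delta$, so $D = \hat Z$ is an ESMD and $\bar S^{\delta_*}=1/D$ is the inverse of one. I expect the main obstacle to be the bookkeeping in the drift computation: keeping track of which terms carry $\pi_{\delta,t-}$ versus $\pi_{\delta,t}$, correctly handling the compensator terms $\sqrt{\lambda_t^{k-m}}\,dt$ that convert $dN$ to $dM$ in~\eqref{eq:conti}, and verifying that the Brownian cross-variation term between numerator and denominator is exactly what is needed to cancel the residual Brownian drift. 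The jump part is conceptually the heart of it, and it is there that Assumption~\ref{ass:test} is used — it guarantees $1+c_t^{*k}/\sqrt{\lambda_t^{k-m}}>0$ so that the ratio appearing in the drift is well-defined and the concavity/maximization argument of Section~\ref{S.2.1} applies.
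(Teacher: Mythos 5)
Your overall strategy (product/It\^o computation for $\hat S^\delta=\bar S^\delta/\bar S^{\delta_*}$, then Fatou for nonnegative local supermartingales) is the paper's, and your positivity/normalization remarks are fine. But the shortcut you propose for the sign of the drift does not work as stated, and it is precisely the step that carries all the content. You write that the drift of $\log\hat S^\delta$ is $g^\delta-g^{\delta_*}\le 0$ and that the convexity corrections from exponentiating ``recombine into a nonpositive quantity.'' Those corrections are $\tfrac12\sum_{k\le m}(c^k_t-c^{*k}_t)^2$ plus, for each jump component, $\lambda^{k-m}_t\bigl(e^{x}-1-x\bigr)$ with $x=\log\frac{1+c^k_t/\sqrt{\lambda^{k-m}_t}}{1+c^{*k}_t/\sqrt{\lambda^{k-m}_t}}$; they are all \emph{nonnegative}. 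So the drift of $\hat S^\delta$ is a nonpositive term plus nonnegative terms, and its sign cannot be inferred from $g^\delta\le g^{\delta_*}$ alone. The same objection applies to your earlier phrasing that the jump drift is ``$\le 0$ because $c^{*k}$ maximizes~\eqref{eq:varg}'': maximality gives $g(c)\le g(c^*)$, which is not the quantity appearing in the drift of the ratio.

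What actually closes the argument — and what the paper's explicit computation amounts to — is the \emph{first-order (stationarity) condition} at $c^*$, not the inequality $g^\delta\le g^{\delta_*}$. Carrying out the recombination, the drift of $d\hat S^\delta_t/\hat S^\delta_{t-}$ collapses to
\[
\sum_{k=1}^m (c^k_t-c^{*k}_t)\bigl(\theta^k_t-c^{*k}_t\bigr)
+\sum_{k=m+1}^d (c^k_t-c^{*k}_t)\left[\bigl(\theta^k_t-\sqrt{\lambda^{k-m}_t}\bigr)+\frac{\lambda^{k-m}_t}{\sqrt{\lambda^{k-m}_t}+c^{*k}_t}\right],
\]
i.e.\ to $(c_t-c^*_t)\cdot\nabla g(c^*_t)$, which is identically zero by~\eqref{eq:c}. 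Hence the benchmarked portfolios are in fact nonnegative \emph{local martingales} (a stronger conclusion the paper records and uses later), and the supermartingale property follows by Fatou. In the constrained setting of Section~\ref{S.3} the maximizer sits on the boundary, $\nabla g(\tilde c)\neq 0$, and the same expression yields a strictly negative drift — which is exactly why distinguishing stationarity from mere maximality matters. To repair your proof, either perform the product-formula computation in full as the paper does, or replace the ``convexity corrections recombine'' sentence by the explicit identification of the drift with $(c_t-c^*_t)\cdot\nabla g(c^*_t)$ and invoke the first-order conditions.
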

\begin{proof} It suffices to show that every benchmarked portfolio is a local
martingale that, being nonnegative, is then a supermartingale by
Fatou's lemma.  According to the product formula (see Corollary II-2
in~\cite{P}) we first have that
\[
d\left(\frac{S_t^\delta}{S_t^{\delta_*}}\right)=d\left(\frac{\bar{S}_t^\delta}{\bar{S}_t^{\delta_*}}\right)
=\frac{d\bar{S}_t^\delta}{\bar{S}_{t-}^{\delta_*}}+d\left(\frac{1}{\bar{S}_{t-}^{\delta_*}}\right)\bar{S}_t^\delta
+d\left[ \bar{S}_t^\delta,\frac{1}{\bar{S}_t^{\delta_*}}\right]
\]
For the term involving  $d\bar{S}_t^\delta$  note that, from
(\ref{eq:conti}) and taking into account (\ref{eq:M}) as well as the
definition of the investment ratios $\pi_{\delta,t}$, one obtains
\beq \label{eq:S}
\begin{split}
d\bar{S}_t^\delta&=\bar{S}_{t-}^\delta\left\{\sum_{k=1}^m\left(\sum_{j=1}^d\delta_t^j\frac{S_t^j}
{S_t^\delta}b_t^{j,k}\right)\left(\theta_t^kdt+dW_t^k\right)+\sum_{k=m+1}^{d}
\left(\sum_{j=1}^d\delta_t^j\frac{S_{t-}^j}{S_{t-}^\delta}b_t^{j,k}\right)\left(\theta_t^kdt+dM_t^{k-m}\right)\right\}\\
&=\sum_{k=1}^m\left(\sum_{j=1}^d\delta_t^j\bar{S}_t^jb_t^{j,k}\right)\left(\theta_t^kdt+dW_t^k\right)
+\sum_{k=m+1}^{d}\left(\sum_{j=1}^d\delta_t^j\bar{S}_{t-}^jb_t^{j,k}\right)\left(\theta_t^kdt+dM_t^{k-m}\right)
\end{split}
\eeq Next, using It\^o's formula, from (\ref{eq:gop}) we obtain for
the inverse of the discounted GOP
\[
\begin{split}
d\left(\frac{1}{\bar{S}_t^{\delta_*}}\right)=\Biggl[&-\frac{1}{\bar{S}_t^{\delta_*}}\left(\sum_{k=1}^m
\left(\theta_t^k\right)^2+\sum_{k=m+1}^d\frac{\theta_t^k}{1-\frac{\theta_t^k}{\sqrt{\lambda_t^{k-m}}}}
\left(\theta_t^k-\sqrt{\lambda_t^{k-m}}\right)\right)\\
&+\frac{1}{\bar{S}_t^{\delta_*}}\sum_{k=1}^m\left(\theta_t^k\right)^2\Biggl]dt-\frac{1}{\bar{S}_t^{\delta_*}}\sum_{k=1}^m\theta_t^kdW_t^k\\
&+\frac{1}{\bar{S}_{t-}^{\delta_*}}\sum_{k=m+1}^d\left[\left(1+\frac{\theta_t^k}{1-\frac{\theta_t^k}
{\sqrt{\lambda_t^{k-m}}}}\frac{1}{\sqrt{\lambda_t^{k-m}}}\right)^{-1}-1\right]dN_t^{k-m}
\end{split}
\]
The terms of the last sum can be simplified in the following way
\[
\frac{1}{S_{t-}^{\delta_*}}\left(\left(\frac{\sqrt{\lambda_t^{k-m}}}{\sqrt{\lambda_t^{k-m}}-\theta_t^k}\right)^{-1}-1\right)
=-\frac{\theta_t^k}{\bar{S}_{t-}^{\delta_*}\sqrt{\lambda_t^{k-m}}}
\]
so that, by adding and subtracting
$-\sum_{k=m+1}^d\frac{\theta_t^k\sqrt{\lambda_t^{k-m}}}{\bar{S}_t^{\delta_*}}dt$
and rearranging all the terms, we obtain \beq \label{eq:inverse}
d\left(\frac{1}{\bar{S}_t^{\delta_*}}\right)=-\frac{1}{\bar{S}_t^{\delta_*}}\sum_{k=1}^m\theta_t^kdW_t^k
-\frac{1}{\bar{S}_{t-}^{\delta_*}}\sum_{k=m+1}^d\theta_t^kdM_t^{k-m}.
\eeq Finally, the last term is just
\[
d\left[ \bar{S}_t^\delta,\frac{1}{\bar{S}_t^{\delta_*}}\right] =
-\sum_{k=1}^m
\left(\sum_{j=1}^d\delta_t^j\hat{S}_t^jb_t^{j,k}\right)\theta_t^kdt-\sum_{k=m+1}^d
\left(\sum_{j=1}^d\delta_t^j\hat{S}_{t-}^jb_t^{j,k}\right)\frac{\theta_t^k}{\lambda_t^{k-m}}dN_t^{k-m}
\]
Summing up all the components we obtain
\[
\begin{split}
d\left(\frac{\bar{S}_t^\delta}{\bar{S}_t^{\delta_*}}\right)&=
\sum_{k=1}^m\left(\sum_{j=1}^d\delta_t^j\hat{S}_t^jb_t^{j,k}-\hat{S}_t^\delta\theta_t^k\right)dW_t^k
+\sum_{k=m+1}^d\left(\sum_{j=1}^d\delta_t^j\hat{S}_{t-}^jb_t^{j,k}-\hat{S}_{t-}^\delta\theta_t^k\right)dM_t^{k-m}\\
&+\sum_{k=m+1}^d\left(\sum_{j=1}^d\delta_t^j\hat{S}_t^jb_t^{j,k}\right)\theta_t^kdt-\sum_{k=m+1}^d
\left(\sum_{j=1}^d\delta_t^j\hat{S}_{t-}^jb_t^{j,k}\right)\frac{\theta_t^k}{\lambda_t^{k-m}}dN_t^{k-m}
\end{split}
\]
from which, observing that
\[
\begin{split}
\sum_{k=m+1}^d&\left(\sum_{j=1}^d\delta_t^j\hat{S}_t^jb_t^{j,k}\right)\theta_t^kdt-\sum_{k=m+1}^d
\left(\sum_{j=1}^d\delta_t^j\hat{S}_{t-}^jb_t^{j,k}\right)\frac{\theta_t^k}{\lambda_t^{k-m}}dN_t^{k-m}=\\
=&-\sum_{k=m+1}^d\left(\sum_{j=1}^d\delta_t^j\hat{S}_{t-}^jb_t^{j,k}\right)\frac{\theta_t^k}
{\sqrt{\lambda_t^{k-m}}}\left(\frac{dN_t^{k-m}-\lambda_t^{k-m}dt}{\sqrt{\lambda_t^{k-m}}}\right)
\end{split}
\]
we obtain
\[
\begin{split}
d\left(\frac{\bar{S}_t^\delta}{\bar{S}_t^{\delta_*}}\right)=&\sum_{k=1}^m\left(\sum_{j=1}^d
\delta_t^j\hat{S}_t^jb_t^{j,k}-\hat{S}_t^\delta\theta_t^k\right)dW_t^k\\
+&\sum_{k=m+1}^d\left(\left(\sum_{j=1}^d\delta_t^j\hat{S}_{t-}^jb_t^{j,k}\right)\left(1-\frac{\theta_t^k}
{\sqrt{\lambda_t^{k-m}}}\right)-\hat{S}_{t-}^\delta\theta_t^k\right)dM_t^{k-m}
\end{split}
\]
which, indeed,  is a nonnegative local martingale and thus a
supermartingale. \end{proof}

From this proof it actually follows that $\hat{Z}_t$ is an
equivalent local martingale deflator (ELMD) in the sense that
$\hat{Z}_t\,\bar S^{\delta}$ are local martingales.

Generalizing the notion of Benchmarked portfolio process (see Definition \ref{benchm})  we recall the following
\begin{df} An admissible portfolio process $S^{\tilde{\delta}}=\left(S_t^{\tilde{\delta}}\right)_{0\leq t\leq T}$
has the numeraire property if all admissible portfolio processes $S^\delta=\left(S^\delta_t\right)_{0\leq t\leq T}$,
when denominated in terms of $S^{\tilde{\delta}}$, are supermartingales, i.e. if the process
$S^\delta/S^{\tilde{\delta}}=\left(S^\delta_t/S^{\tilde{\delta}}_t\right)_{0\leq t\leq T}$ is a supermartingale
for every admissible trading strategy $\delta$.
\label{df:numeraire}
\end{df}
\begin{rem}
As a corollary of Proposition~\ref{proposition:num} we have that the GOP has the numeraire property.
\label{rem:rem}
\end{rem}
In continuous financial markets it can be shown that the numeraire
portfolio is unique (see e.g. \cite{FR}). The proof in \cite{FR} can
be carried over rather straightforwardly to the jump-diffusion case
(see \cite{M}) so that we have
\begin{prop} The numeraire portfolio process $S^{\tilde{\delta}}=\left(S_t^{\tilde{\delta}}\right)_{0\leq t\leq T}$
is unique (in the sense of indistinguishability). Furthermore, there exists an unique admissible trading strategy
$\tilde{\delta}$ such that $S^{\tilde{\delta}}$ is the numeraire portfolio, up to a null subset of $\Omega\times[0,T]$.
\label{proposition:num2}
\end{prop}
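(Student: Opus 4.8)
\emph{Proof strategy.} The plan is to get existence for free and then establish the two uniqueness claims separately — first for the value process, then for the generating strategy — adapting to the present jump-diffusion setting the continuous-market argument of \cite{FR} (see also \cite{M}). Existence needs nothing new: by Remark~\ref{rem:rem} the GOP $S^{\delta_*}$ has the numeraire property, so a numeraire portfolio exists.

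For uniqueness of the process I would take two numeraire portfolios $S^{\tilde\delta}$ and $S^{\hat\delta}$. Multiplying an admissible portfolio process by a positive constant again yields an admissible portfolio process and clearly preserves the numeraire property of Definition~\ref{df:numeraire}, so one may rescale (equivalently, fix the initial capital) to have $S_0^{\tilde\delta}=S_0^{\hat\delta}$. Applying Definition~\ref{df:numeraire} to $S^{\hat\delta}$ with the strategy $\tilde\delta$ shows $X:=S^{\tilde\delta}/S^{\hat\delta}$ is a nonnegative supermartingale with $X_0=1$, and applying it to $S^{\tilde\delta}$ with the strategy $\hat\delta$ shows $1/X$ is a nonnegative supermartingale with $(1/X)_0=1$; in particular $E[X_t]\le 1$ and $E[1/X_t]\le 1$ for all $t\in[0,T]$. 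Jensen's inequality for the strictly convex map $x\mapsto 1/x$ on $(0,\infty)$ then gives $1\ge E[1/X_t]\ge 1/E[X_t]\ge 1$, so all of these are equalities, and the equality case of Jensen forces $X_t=E[X_t]=1$ $P$-a.s.\ for each $t$. Right-continuity of both portfolio processes upgrades this to indistinguishability of $S^{\tilde\delta}$ and $S^{\hat\delta}$.

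For uniqueness of the strategy, let $V$ be the numeraire portfolio process just obtained, put $s:=V_0$, and suppose $\delta$ and $\eta$ are admissible strategies with $S^{s,\delta}=S^{s,\eta}=V$. By~\eqref{eq:conti}, $V$ satisfies a linear SDE driven by $W^1,\dots,W^m$ and $N^1,\dots,N^{d-m}$ whose $dW^k_t$-coefficient is $V_{t-}\sum_{j=1}^d\pi^j_{\delta,t}b^{j,k}_t$ for $k\le m$ and whose compensated-jump ($dM^{k-m}_t$) coefficient is $V_{t-}\sum_{j=1}^d\pi^j_{\delta,t}b^{j,k}_t$ for $k>m$, and likewise with $\eta$. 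Since $W$ and $N$ are independent, generate $\mathcal{F}$ and have no common jumps, the canonical decomposition of the semimartingale $V$ and the representation of its martingale part as a sum of stochastic integrals against $W^1,\dots,W^m,M^1,\dots,M^{d-m}$ are unique; comparing integrands in $L^2(dt\otimes dP)$ — which is the relevant space because $\langle W^k\rangle_t=\langle M^k\rangle_t=t$, the latter equality being where strict positivity of $\lambda^{k-m}$ enters — one obtains $\sum_{j=1}^d\pi^j_{\delta,t}b^{j,k}_t=\sum_{j=1}^d\pi^j_{\eta,t}b^{j,k}_t$ for every $k\in\{1,\dots,d\}$, $dt\otimes dP$-a.e. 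This is the identity $b_t^\top\pi_{\delta,t}=b_t^\top\pi_{\eta,t}$, and Assumption~\ref{ass:2} (invertibility of $b_t$ for Lebesgue-a.e.\ $t$) yields $\pi_{\delta,t}=\pi_{\eta,t}$ a.e. Inverting $\pi^j_{\delta,t}=\delta^j_t S^j_{t-}/V_{t-}$ gives $\delta^j_t=\eta^j_t$ for $j=1,\dots,d$, and the self-financing identity $V_{t-}=\sum_{j=0}^d\delta^j_t S^j_{t-}$ then pins down $\delta^0_t=\eta^0_t$; hence $\delta$ and $\eta$ agree off a null subset of $\Omega\times[0,T]$.

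The semimartingale bookkeeping above is routine. The two points I would argue with care are the equality case in Jensen's inequality, which is precisely what converts the two one-sided supermartingale estimates into a genuine pathwise identification of the value processes, and the assertion that the coefficients of the jump integrals are pinned down $dt\otimes dP$-a.e.\ (not merely on a smaller predictable set) — this is where strict positivity of the intensities is essential, Assumption~\ref{ass:2} then finishing the job. I expect the latter to be the main obstacle to present cleanly.
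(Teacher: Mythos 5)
Your argument is correct and is exactly the route the paper itself indicates: the paper gives no proof, deferring to the continuous-market argument of \cite{FR} (and \cite{M}) "carried over to the jump-diffusion case", and your two steps — the supermartingale/Jensen equality argument identifying the value processes, and the identification of integrands against $W^1,\dots,W^m,M^1,\dots,M^{d-m}$ followed by Assumption~\ref{ass:2} to recover the strategy — are precisely that adaptation. No substantive gap.
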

\begin{rem}
Since, as we have seen, the GOP has the numeraire property and the numeraire portfolio is unique, the GOP is the unique numeraire portfolio.
\label{remark:rem2}
\end{rem}
The numeraire property of the GOP plays a crucial role in the
concept of \textit{real world pricing} allowing one, also in the
present jump-diffusion setting, to perform pricing of contingent
claims in financial markets for which no ELMM may exist
(see~\cite{PH}, \cite{FR}).

\section{Supermartingale Deflators/Densities}\label{S.3}

Due to Propositions~\ref{proposition:num} and~\ref{proposition:num2}
(see Remarks~\ref{rem:rem} and~\ref{remark:rem2}), the GOP coincides
with the unique numeraire portfolio and its discounted value is also
the inverse of the supermartingale deflator $\hat{Z}_t$. This means
that, if we express all price processes in terms of the GOP, the
original probability measure $P$ becomes an ESMM. We now investigate
particular cases in which $\hat{Z}_t$ is not a martingale and so,
since it will be shown to be also the only candidate for the density
process of an ELMM, for these cases NFLVR fails to hold.
Furthermore, when expressing the price processes in terms of the
GOP, the physical measure $P$ is not an ELMM.

We start by showing that for our financial market model the inverse
of the discounted GOP is the only possible local martingale
deflator; thus it is also the only candidate to be the Radon-Nikodym
derivative (density process) of an ELMM. These concepts are in fact
strictly linked to one another, since a supermartingale deflator $D$
defines an ELMM if and only if $D_T$ integrates to $1$. On the other
hand, the Radon-Nikodym derivative of an ELMM is of course a
supermartingale deflator. We prove the claim of the uniqueness by
studying changes of measure via the general Radon-Nikodym derivative
when dealing with a jump-diffusion process, namely (see \cite{Br},
\cite{R}) \beq \label{eq:radon}
\begin{split}
L_t=\exp&\left\{-\frac{1}{2}\sum_{k=1}^m\int_0^t\left(\varphi_s^k\right)^2ds+\sum_{k=1}^m\int_0^t\varphi_s^kdW_s^k\right\}\cdot\\
&\cdot\prod_{k=m+1}^d\left\{\exp\left[\int_0^t\left(1-\psi_s^{k-m}\right)\lambda_s^{k-m}ds\right]\prod_{n=1}^{N_t^{k-m}}\psi_{T_n}^{k-m}\right\}
\end{split}
\eeq where $\varphi_t$ is a square integrable predictable process
and $\psi_t$ is a positive predictable process, integrable with
respect to $\lambda_t$. We will show that these coefficients have to
satisfy a linear system whose only solution leads to the same
dynamics as for the inverse of the discounted GOP.
\begin{prop} Under Assumptions~\ref{ass:1},~\ref{ass:2} and ~\ref{ass:test}, the inverse of the discounted GOP, namely $\hat{Z}_t$,
is the only candidate to be the Radon-Nikodym derivative of an ELMM.
Equivalently, it is the only ELMD in the sense of what was specified
after the statement of Proposition \ref{proposition:num}.
\end{prop}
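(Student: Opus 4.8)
The plan is to impose that the generic Radon–Nikodym density $L_t$ in \eqref{eq:radon} actually defines an ELMM, write down what this requires of the coefficient processes $(\varphi^k)$ and $(\psi^{k-m})$, and show that these requirements force $\varphi^k=-\theta^k$ for $k\in\{1,\dots,m\}$ and $\psi^{k-m}=1-\theta^k/\sqrt{\lambda^{k-m}}$ for $k\in\{m+1,\dots,d\}$, which by \eqref{eq:inverse} is exactly the dynamics of $\hat Z_t$. Concretely, under the candidate measure $Q$ with $dQ/dP|_{\F_t}=L_t$, the Girsanov theorem for jump-diffusions tells us that $\widetilde W^k_t:=W^k_t-\int_0^t\varphi^k_s\,ds$ is a $Q$-Brownian motion and that the $Q$-intensity of $N^{k-m}$ becomes $\psi^{k-m}_t\lambda^{k-m}_t$; hence the compensated process $\widetilde M^k$ with $d\widetilde M^k_t=(dN^{k-m}_t-\psi^{k-m}_t\lambda^{k-m}_t\,dt)/\sqrt{\psi^{k-m}_t\lambda^{k-m}_t}$ is a $Q$-martingale. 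First I would substitute these into \eqref{eq:rew} for the discounted prices $\bar S^j_t=S^j_t/S^0_t$, collect the finite-variation ($dt$) part, and set it to zero — this is the local-martingale requirement for $\bar S^j$ under $Q$, for every $j$.

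The resulting condition, for each $t$, is a linear system in the unknown vector of coefficients. For the Wiener block it reads $\sum_{k=1}^m b^{j,k}_t(\theta^k_t+\varphi^k_t)+(\text{jump contribution})=0$; for the jump block, rewriting $\theta^k_t dt + dM^{k-m}_t$ in terms of $d\widetilde M^{k-m}_t$ and the new compensator $\psi^{k-m}_t\lambda^{k-m}_t\,dt$, the drift contributed by the $k$-th jump source is $b^{j,k}_t\big(\theta^k_t-\sqrt{\lambda^{k-m}_t}+\sqrt{\lambda^{k-m}_t}\,\psi^{k-m}_t\big)$ — wait, more carefully one gets $b^{j,k}_t\theta^k_t + b^{j,k}_t(\psi^{k-m}_t-1)\sqrt{\lambda^{k-m}_t}$ after passing $dN^{k-m}_t$ through its $Q$-compensator. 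Setting the total $dt$-coefficient to zero for all $j$ and invoking Assumption~\ref{ass:2} (invertibility of $b_t$, so we may strip off the matrix $b_t$ and argue componentwise in $k$) yields, for $k\le m$, $\theta^k_t+\varphi^k_t=0$, and for $m<k\le d$, $\theta^k_t+(\psi^{k-m}_t-1)\sqrt{\lambda^{k-m}_t}=0$, i.e. $\psi^{k-m}_t=1-\theta^k_t/\sqrt{\lambda^{k-m}_t}$. Note Assumption~\ref{ass:test} guarantees this $\psi^{k-m}_t$ is strictly positive, so it is an admissible choice of coefficient. Finally I would observe that plugging $\varphi^k_t=-\theta^k_t$ and this $\psi^{k-m}_t$ back into \eqref{eq:radon} produces exactly the stochastic exponential whose SDE is \eqref{eq:inverse}, so $L_t=\hat Z_t$ is forced; equivalently, since any ELMM density is an ELMD and we have just shown the coefficients are uniquely pinned down, $\hat Z_t$ is the only ELMD.

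The main obstacle is the bookkeeping in passing from the $P$-dynamics of $\bar S^j$ (written with the $P$-martingales $M^{k-m}$) to the $Q$-canonical decomposition: one must carefully re-express $dM^{k-m}_t=(dN^{k-m}_t-\lambda^{k-m}_t\,dt)/\sqrt{\lambda^{k-m}_t}$ using the $Q$-compensator $\psi^{k-m}_t\lambda^{k-m}_t\,dt$, being attentive to the $1/\sqrt{\lambda^{k-m}_t}$ normalisation, so that the spurious drift $(\psi^{k-m}_t-1)\sqrt{\lambda^{k-m}_t}\,b^{j,k}_t\,dt$ is identified correctly; a sign or normalisation slip here changes the linear system. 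A secondary technical point is that one should restrict attention to the discounted primary assets $\bar S^1,\dots,\bar S^d$ (it suffices that these be $Q$-local martingales, since then every discounted admissible portfolio is too by self-financing) and invoke Assumption~\ref{ass:2} to conclude that the $d\times d$ system decouples into the $d$ scalar equations above; without invertibility of $b_t$ one could not isolate the individual $\varphi^k,\psi^{k-m}$. Everything else — verifying the exponential in \eqref{eq:radon} with the found coefficients solves \eqref{eq:inverse}, and recalling from the remark after Proposition~\ref{proposition:num} that $\hat Z$ is indeed an ELMD — is routine.
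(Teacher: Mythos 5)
Your proposal is correct and follows essentially the same route as the paper's own proof: apply the Girsanov-type change of measure associated with the general density \eqref{eq:radon}, impose the zero-drift (local martingale) condition on the discounted primary assets, use the invertibility of $b_t$ (Assumption~\ref{ass:2}) to solve the resulting linear system uniquely for $\varphi_t^k=-\theta_t^k$ and $\psi_t^{k-m}=1-\theta_t^k/\sqrt{\lambda_t^{k-m}}$, and identify the resulting exponential with $\hat Z_t$ via \eqref{eq:inverse}. The only cosmetic difference is your normalisation of the compensated jump martingale under $Q$, which does not affect the drift identification.
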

\begin{proof} We start by defining the Wiener and the Poisson martingales $W_t^Q$ and $M_t^Q$ under
the new measure $Q$ defined by $L_t$ in~\eqref{eq:radon}:
\[
\begin{cases}
\;\;dW_t^{Q,k}\;\;\;\;\;=dW_t^k\;\;\;-\varphi_t^kdt& \qquad\text{for} \quad k\in\{1,2,\dots,m\}\\
\;\;dM_t^{Q,k-m}=dN_t^{k-m}-\psi_t^{k-m}\lambda_t^{k-m}dt &\qquad
\text{for} \quad k\in\{m+1,\dots,d\}
\end{cases}
\]
thereby obtaining the SDEs satisfied by the primary security accounts
\beq
\label{eq:EL}
\begin{split}
dS_t^j=S_{t-}^j&\Biggl\{\left(r_t+\sum_{k=1}^db_t^{j,k}\theta_t^k+\sum_{k=1}^mb_t^{j,k}\varphi_t^k
+\sum_{k=m+1}^db_t^{j,k}\psi_t^{k-m}\sqrt{\lambda_t^{k-m}}-\sum_{k=m+1}^db_t^{j,k}\sqrt{\lambda_t^{k-m}}\right)dt\\
&+\sum_{k=1}^mb_t^{j,k}dW_t^{Q,k}+\sum_{k=m+1}^d\frac{1}{\sqrt{\lambda_t^{k-m}}}b_t^{j,k}dM_t^{Q,k-m}\Biggl\}
\end{split}
\eeq for every $j\in\{1,\dots,d\}$ and $t\in[0,T]$. If $L_t$ is the
Radon-Nikodym derivative of an ELMM, the drift term in~\eqref{eq:EL}
must be equal to $r_t$ for every $t\in[0,T]$, so that the
coefficients $\varphi_t$ and $\psi_t$ must be the solution to the
linear system defined by the following equations
\[
\sum_{k=1}^mb_t^{j,k}\varphi_t^k+\sum_{k=m+1}^db_t^{j,k}\psi_t^{k-m}\sqrt{\lambda_t^{k-m}}
=-\sum_{k=1}^mb_t^{j,k}\theta_t^k-\sum_{k=m+1}^db_t^{j,k}\theta_t^k+\sum_{k=m+1}^db_t^{j,k}\sqrt{\lambda_t^{k-m}}
\]
for every $j\in\{1,\dots,d\}$ and $t\in[0,T]$. Since, by virtue of
the standing Assumption~\ref{ass:2} the generalized volatility
matrix $b_t$ has full rank, the linear system admits an unique
solution which is given by
\[
\begin{cases}
\;\;\varphi_t^k\;\;\;\;\;=-\theta_t^k, & \qquad\text{for} \quad k\in\{1,2,\dots,m\}\\
\;\;\psi_t^{k-m}=1-\frac{\theta_t^k}{\sqrt{\lambda_t^{k-m}}},
&\qquad \text{for} \quad k\in\{m+1,\dots,d\}
\end{cases}
\]
Plugging these terms into~\eqref{eq:radon} we see that the Radon-Nikodym derivative has thus the following expression
\beq
\label{eq:LL}
\begin{split}
L_t=\exp&\left\{-\frac{1}{2}\sum_{k=1}^m\int_0^t\left(\theta_s^k\right)^2ds-\sum_{k=1}^m\int_0^t\theta_s^kdW_s^k\right\}\\
&\prod_{k=m+1}^d\left\{\exp\left[\int_0^t\theta_t^k\sqrt{\lambda_s^{k-m}}ds\right]
\prod_{n=1}^{N_t^{k-m}}\left(1-\frac{\theta_{T_n}^k}{\sqrt{\lambda_{T_n}^{k-m}}}\right)\right\}
\end{split}
\eeq which is precisely the inverse of the discounted GOP, since
(see the equation~\eqref{eq:inverse} satisfied by
$(\bar{S}_t^{\delta_*})^{-1}$)
\[
dL_t=-L_{t-}\left(\sum_{k=1}^m\theta_t^kdW_t^k+\sum_{k=m+1}^d\theta_t^kdM_t^{k-m}\right)
\]
\end{proof}
\begin{rem} Since the vector $\psi_t$ in~\eqref{eq:radon} has to be positive for every $t\in[0,T]$, we
note that the condition
$1-\frac{\theta_t^k}{\sqrt{\lambda_t^{k-m}}}\geq0$ has to hold true
for every $k\in\{m+1,\dots,d\}$. This means that if
Assumption~\ref{ass:test} is violated, and there is at least one
$k\in\{m+1,\dots,d\}$ for which $\sqrt{\lambda_t^{k-m}}<\theta_t^k$,
there cannot exist an ELMM.
\end{rem}
\begin{rem}Notice that, by analogy to the 2nd FTAP, for our complete
market here we have uniqueness of the ELMD.
\end{rem}
We want to emphasize that, thanks to the proposition just proved,
the financial market does not admit an ELMM as soon as the inverse
of the discounted GOP is a strict supermartingale that is not a
martingale. We therefore try to find some particular cases where the
process $\hat{Z}_t$ is a supermartingale that is not a martingale
(not even a local martingale). In the continuous case the most
notorious example of a martingale deflator that does not yield an
ELMM is the three-dimensional Bessel process $\beta_t$: the GOP is
simply obtained by placing all the wealth in the stock, but it has
infinite expected growth rate and $\frac{1}{\beta_t}$ fails to
integrate to $1$. In the jump diffusion market that we are
considering we note that, as $\theta_t^k\to\sqrt{\lambda_t^{k-m}}$
for any $k\in\{m+1,\dots,d\}$, the GOP would explode and therefore
$\hat{Z}_t$ would be close to zero as soon as a jump occurs
(see~\eqref{eq:LL}). In this case, $\hat{Z}_T$ will not integrate to
$1$, thus preventing the existence of an ELMM, and the GOP cannot be
used as a numeraire.

\subsection{Supermartingale Densities that are not (local) Martingale Densities}

Let us see what happens if we violate Assumption~\ref{ass:test},
namely if we let $\sqrt{\lambda_t^{k-m}}$ be less than $\theta_t^k$.
For simplicity, in this entire section we will consider the
particular case in which $d=2$ and $m=1$.

In order to be able to maximize the second sum in~\eqref{eq:gg} we
then have to impose a restriction on the possible trading strategies
in the form of the following assumption.
\begin{ass}
There exists a positive real number $\psi$ such that for any
admissible trading strategy
$\pi=\{\pi_t=\left(\pi_t^0,\pi_t^1,\pi_t^2\right)^\top,\;t\in[0,T]\}$
we have \beq \label{eq:ex} \pi_t^1 b_t^{1,2}+\pi_t^2
b_t^{2,2}\leq\psi, \eeq for all $t\in[0,T]$ \label{ass:ex}
\end{ass}
\begin{rem}
Assumption~\ref{ass:ex} can be seen as a \textit{convex constraint}
limiting the portfolio volatility belonging to the jump martingale
$M$. This condition has a clearer financial interpretation when
$b_t^2:=b_t^{1,2}=b_t^{2,2}$ for each $t\in[0,T]$. In this case we
get that~\eqref{eq:ex} is equivalent to a constraint on the minimum
amount of wealth invested in the risk free asset
$\pi_t^0\geq1-\frac{\psi}{b_t^2}.$ We emphasize that this applies
also to the case in which $b_t^2=-\sqrt{\lambda_t}$ for each
$t\in[0,T]$, in which jumps are used to describe the default of the
primary accounts.
\end{rem}
In this case the optimal generalized portfolio volatilities are described by the following predictable process
\beq
\label{eq:cmod}
\tilde{c}_t^k=
\begin{cases}
\;\;\theta_t^1 & \qquad\text{for} \quad k=1\\
\;\;\psi &\qquad \text{for} \quad k=2.
\end{cases}
\eeq The first component $\tilde{c}_k^1$ follows from the first
order conditions identifying the maximum growth rate, while
$\tilde{c}_k^2$ is the maximum value obtainable in the constrained
setting. From~\eqref{eq:conti} we have that, for the case when
$c_t^k$ are given by~\eqref{eq:cmod}, the discounted GOP must
satisfy the following SDE \beq \label{eq:gopex}
d\bar{S}_t^{\delta_*}=\bar{S}_{t-}^{\delta_*}\left\{\theta_t^1\left(\theta_t^1dt+dW_t\right)+\psi\left(\theta_t^2dt+dM_t\right)\right\}.
\eeq The convex constraints just introduced are the framework that
enables us to provide a simple example of a market in which GOP
denominated prices are strict supermartingales, as we show in the
next theorem.
\begin{thm} Under Assumptions~\ref{ass:1},~\ref{ass:2} and~\ref{ass:ex}, the process $\hat{Z}_t$ and
any benchmarked portfolio process are supermartingales which are not local martingales.
\label{theorem:ex}
\end{thm}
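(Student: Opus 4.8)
The plan is to exploit the fact that, after the constraint of Assumption~\ref{ass:ex} is imposed, the discounted GOP $\bar S^{\delta_*}$ satisfies the explicit SDE~\eqref{eq:gopex}, with the event-driven volatility frozen at the constant $\psi$ rather than at the unconstrained optimizer $c_t^{*2}=\theta_t^2/(1-\theta_t^2(\lambda_t)^{-1/2})$. I would first redo the computation of Proposition~\ref{proposition:num}, i.e. apply It\^o's formula to $\hat Z_t=1/\bar S_t^{\delta_*}$ starting from~\eqref{eq:gopex}, and also recompute $d(\bar S_t^\delta/\bar S_t^{\delta_*})$ for an arbitrary admissible $\delta$. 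The key point is to track exactly the drift term that arises from the jump part. In the unconstrained case the miraculous cancellation in the proof of Proposition~\ref{proposition:num} left a pure local-martingale dynamics; here, because $\psi$ is in general strictly smaller than the unconstrained optimal value, that cancellation will be incomplete and a nonzero finite-variation term will survive. I would show this surviving drift is $\le 0$: concretely, for $\hat Z_t$ the drift coefficient should come out proportional to $-\hat Z_{t-}\bigl(\theta_t^2-\sqrt{\lambda_t}\bigr)\psi/\sqrt{\lambda_t}\cdot(\text{something}) $ — the precise sign being forced by the inequality $\sqrt{\lambda_t}<\theta_t^2$ that we are now assuming (Assumption~\ref{ass:test} is violated) together with $\psi>0$ and Assumption~\ref{ass:1} (which guarantees $1+\psi/\sqrt{\lambda_t}>0$ so $\hat Z$ stays positive across jumps).

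Concretely, the steps in order:

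\begin{description}
\item[Step 1.] From~\eqref{eq:gopex}, apply It\^o to get
\[
d\hat Z_t = \hat Z_{t-}\Bigl\{ -\theta_t^1\,dW_t + \Bigl[\bigl(1+\tfrac{\psi}{\sqrt{\lambda_t}}\bigr)^{-1}-1\Bigr]dN_t
- \theta_t^1\theta_t^1\,dt + \bigl(\theta_t^1\bigr)^2 dt - \psi\theta_t^2\,dt\Bigr\},
\]
then rewrite the $dN_t$ term using~\eqref{eq:M} to separate its compensator; the residual $dt$-term is the drift of $\hat Z$, which I claim is $\le 0$ precisely because $\sqrt{\lambda_t}\le\theta_t^2$. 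This shows $\hat Z$ is a supermartingale.
\item[Step 2.] Compute $d(\bar S_t^\delta/\bar S_t^{\delta_*})$ exactly as in Proposition~\ref{proposition:num}, using~\eqref{eq:S} and the Step~1 dynamics of $\hat Z$, and the cross-variation term. Collecting the drift, show it equals (a nonpositive multiple of) $\hat S_{t-}^\delta$ times $(\theta_t^2-\sqrt{\lambda_t})$, hence $\le 0$ since benchmarked portfolios are nonnegative. Thus every benchmarked portfolio is a supermartingale.
\item[Step 3.] Show these supermartingales are \emph{not} local martingales. Here I would argue that if $\hat Z$ (or some $\bar S^\delta/\bar S^{\delta_*}$) were a local martingale, then being a nonnegative local martingale it would be a supermartingale with a local-martingale representation, which together with the strictly negative drift identified in Steps~1--2 (on a set of positive $dt\otimes P$-measure, since $\theta^2-\sqrt\lambda$ is strictly positive by hypothesis and $\psi>0$) forces a contradiction via uniqueness of the Doob--Meyer/semimartingale decomposition. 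Equivalently, one can exhibit the drift term explicitly as a nondecreasing, nonconstant predictable finite-variation part, which a local martingale cannot have.
\end{description}

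The main obstacle I expect is Step~3 rather than the computation itself: identifying the surviving drift in Steps~1--2 is bookkeeping nearly identical to the proof of Proposition~\ref{proposition:num}, but to conclude ``not a local martingale'' one must rule out the degenerate possibility that the surviving drift is $0$ a.e.\ — i.e.\ one must use \emph{strictly} $\sqrt{\lambda_t}<\theta_t^2$ on a nonnull set, which is exactly the standing hypothesis of this subsection ($\psi$ is a genuine constraint binding because the unconstrained optimum does not exist). One must also be slightly careful that $\hat Z$ and the benchmarked portfolios are honest semimartingales with a well-defined canonical decomposition (finiteness of the compensators, which follows from $\int_0^t\lambda_s^{k}\,ds<\infty$ and predictability/finiteness of $\theta,b$), so that ``nonzero predictable finite-variation part'' genuinely contradicts the local-martingale property. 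Once that is in place, the theorem follows.
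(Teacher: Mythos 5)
Your plan is the paper's plan: compute $d\hat Z_t$ and $d(\bar S^\delta_t/\bar S^{\delta_*}_t)$ from~\eqref{eq:gopex} via It\^o and the product formula, isolate the surviving predictable finite--variation part, show it is nonpositive (strictly negative for $\hat Z$), and conclude by uniqueness of the semimartingale decomposition. Two points, however, need fixing. First, a minor one: your displayed drift in Step~1 drops the compensator contribution $+\psi\sqrt{\lambda_t}\,dt$ coming from $\psi\,dM_t=\psi\bigl(dN_t-\lambda_t dt\bigr)/\sqrt{\lambda_t}$ in~\eqref{eq:gopex}. After also compensating the $dN_t$ term as you propose, the correct drift of $\hat Z$ is
\[
-\hat Z_{t}\left(\psi\bigl(\theta_t^2-\sqrt{\lambda_t}\bigr)+\frac{\psi\lambda_t}{\sqrt{\lambda_t}+\psi}\right)
=-\hat Z_{t}\,\psi\left(\theta_t^2-\frac{\psi\sqrt{\lambda_t}}{\sqrt{\lambda_t}+\psi}\right),
\]
which is indeed strictly negative since $\frac{\psi\sqrt{\lambda_t}}{\sqrt{\lambda_t}+\psi}<\sqrt{\lambda_t}<\theta_t^2$; so Step~1 and the ``not a local martingale'' conclusion for $\hat Z$ go through once the bookkeeping is corrected.

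The genuine gap is in Step~2. The drift of a benchmarked portfolio does \emph{not} come out as a nonpositive multiple of $\hat S^\delta_{t-}(\theta_t^2-\sqrt{\lambda_t})$; it comes out as
\[
\Bigl[\bigl(\delta_t^1\hat S_t^1b_t^{1,2}+\delta_t^2\hat S_t^2b_t^{2,2}\bigr)-\psi\,\hat S_t^\delta\Bigr]
\left(\theta_t^2-\frac{\psi\sqrt{\lambda_t}}{\sqrt{\lambda_t}+\psi}\right),
\]
and its sign is \emph{not} determined by nonnegativity of the benchmarked portfolio alone. The second factor is strictly positive (again by $\sqrt{\lambda_t}<\theta_t^2$), so nonpositivity of the drift requires the first factor to be nonpositive, i.e.\ $\delta_t^1\hat S_t^1b_t^{1,2}+\delta_t^2\hat S_t^2b_t^{2,2}\le\psi\hat S_t^\delta$ --- which is exactly the constraint~\eqref{eq:ex} of Assumption~\ref{ass:ex} rewritten in terms of $\delta$. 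Your sketch never invokes Assumption~\ref{ass:ex} at this point, yet it is the only place it is used and the supermartingale property fails without it (indeed, absent the constraint the growth rate is unbounded and no GOP exists). Relatedly, your worry in Step~3 is well placed but should be aimed here: for portfolios whose event-driven volatility saturates the bound $\psi\hat S^\delta$ (e.g.\ the GOP itself, whose benchmarked value is $\equiv 1$) the drift vanishes, so ``not a local martingale'' can only be asserted where the constraint is not binding; the paper's own proof establishes strict negativity only for $\hat Z$ and nonpositivity in general.
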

\begin{proof}
By analogy to the proof of Proposition \ref{proposition:num}, we
start by calculating the SDE for
$\bar{S}_t^\delta/\bar{S}_t^{\delta_*}$, where $\delta_t$ is an
arbitrary admissible trading strategy. According to the product
formula we proceed by calculating separately the components
$d\bar{S}_t^\delta$, $d\left(\frac{1}{\bar{S}_t^{\delta_*}}\right)$
and $d\left[
\bar{S}_t^\delta,\frac{1}{\bar{S}_t^{\delta_*}}\right]$. The first
component can be obtained by adjusting the general formula
in~\eqref{eq:S} to the case in which $d=2$ so that
\[
\begin{split}
d\bar{S}_t^\delta&=\left(\delta_t^1\bar{S}_t^1b_t^{1,1}+\delta_t^2\bar{S}_t^2b_t^{2,1}\right)
\theta_t^1dt+\left(\delta_t^1\bar{S}_t^1b_t^{1,2}+\delta_t^2\bar{S}_t^2b_t^{2,2}\right)\theta_t^2dt\\
&\;\;+\left(\delta_t^1\bar{S}_t^1b_t^{1,1}+\delta_t^2\bar{S}_t^2b_t^{2,1}\right)dW_t
+\left(\delta_t^1\bar{S}_{t-}^1b_t^{1,2}+\delta_t^2\bar{S}_{t-}^2b_t^{2,2}\right)dM_t
\end{split}
\]
The second term comes from applying the It\^o formula to (\ref{eq:gopex}), namely
\beq
\label{eq:sz}
\begin{split}
d\left(\frac{1}{\bar{S}_t^{\delta_*}}\right)&=-\frac{1}{\bar{S}_t^{\delta_*}}\left(\left(\theta_t^1\right)^2
+\psi\left(\theta_t^2-\sqrt{\lambda_t}\right)\right)dt+\frac{1}{\bar{S}_t^{\delta_*}}\left(\theta_t^1\right)^2dt\\
&\quad-\frac{1}{\bar{S}_t^{\delta_*}}\theta_t^1dW_t+\frac{1}{\bar{S}_{t-}^{\delta_*}}\left[\left(1+\psi\frac{1}
{\sqrt{\lambda_t}}\right)^{-1}-1\right]dN_t\\
&=-\frac{1}{\bar{S}_t^{\delta_*}}\psi\left(\theta_t^2-\sqrt{\lambda_t}\right)dt-\frac{1}{\bar{S}_t^{\delta_*}}
\theta_t^1dW_t-\frac{1}{\bar{S}_{t-}^{\delta_*}}\frac{\psi}{\sqrt{\lambda_t}+\psi}dN_t\\
&=-\frac{1}{\bar{S}_t^{\delta_*}}\left(\psi\left(\theta_t^2-\sqrt{\lambda_t}\right)+\frac{\psi\lambda_t}
{\sqrt{\lambda_t}+\psi}\right)dt-\frac{1}{\bar{S}_{t-}^{\delta_*}}\left(\theta_t^1dW_t+\frac{\psi\sqrt{\lambda_t}}
{\sqrt{\lambda_t}+\psi}dM_t\right)
\end{split}
\eeq Note that the drift term in the SDE satisfied by
$\frac{1}{\bar{S}_t^{\delta_*}}$ is strictly negative as
Assumption~\ref{ass:test} does not hold and both $\psi$ and
$\lambda_t$ are positive. This shows that the inverse of the
discounted GOP, namely $\hat{Z}_t$, is a strict supermartingale.
Finally,
\[
d\left[
\bar{S}_t^\delta,\frac{1}{\bar{S}_t^{\delta_*}}\right]=-\left(\delta_t^1\hat{S}_t^1b_t^{1,1}
+\delta_t^2\hat{S}_t^2b_t^{2,1}\right)\theta_t^1dt-\left(\delta_t^1\hat{S}_{t-}^1b_t^{1,2}
+\delta_t^2\hat{S}_{t-}^2b_t^{2,2}\right)\frac{\psi\sqrt{\lambda_t}}{\sqrt{\lambda_t}+\psi}\frac{1}{\lambda_t}dN_t
\]
On the other hand, for the benchmarked portfolios we have
\[
\begin{split}
d\left(\frac{\bar{S}_t^\delta}{\bar{S}_t^{\delta_*}}\right)&=\left(\delta_t^1\hat{S}_t^1b_t^{1,2}
+\delta_t^2\hat{S}_t^2b_t^{2,2}\right)\theta_t^2dt+\left(\delta_t^1\hat{S}_t^1b_t^{1,1}+\delta_t^2\hat{S}_t^2b_t^{2,1}\right)dW_t\\
&\quad+\left(\delta_t^1\hat{S}_{t-}^1b_t^{1,2}+\delta_t^2\hat{S}_{t-}^2b_t^{2,2}\right)dM_t
-\left(\delta_t^1\hat{S}_{t-}^1b_t^{1,2}+\delta_t^2\hat{S}_{t-}^2b_t^{2,2}\right)
\frac{\psi}{\sqrt{\lambda_t}+\psi}\frac{1}{\sqrt{\lambda_t}}dN_t\\
&\quad-\hat{S}_t^{\delta}\left(\psi\left(\theta_t^2-\sqrt{\lambda_t}\right)+\frac{\psi\lambda_t}
{\sqrt{\lambda_t}+\psi}\right)dt-\hat{S}_{t-}^{\delta}\left(\theta_t^1dW_t+\frac{\psi\sqrt{\lambda_t}}
{\sqrt{\lambda_t}+\psi}dM_t\right)\\
&=\left[\left(\delta_t^1\hat{S}_t^1b_t^{1,2}+\delta_t^2\hat{S}_t^2b_t^{2,2}\right)\left(\theta_t^2
-\frac{\psi\sqrt{\lambda_t}}{\sqrt{\lambda_t}+\psi}\right)-\hat{S}_t^{\delta}
\left(\psi\left(\theta_t^2-\sqrt{\lambda_t}\right)+\frac{\psi\lambda_t}{\sqrt{\lambda_t}+\psi}\right)\right]dt\\
&\quad+\left(\delta_t^1\hat{S}_t^1b_t^{1,1}+\delta_t^2\hat{S}_t^2b_t^{2,1}\right)dW_t
+\left(\delta_t^1\hat{S}_{t-}^1b_t^{1,2}+\delta_t^2\hat{S}_{t-}^2b_t^{2,2}\right)dM_t\\
&\quad-\left(\delta_t^1\hat{S}_{t-}^1b_t^{1,2}+\delta_t^2\hat{S}_{t-}^2b_t^{2,2}\right)
\frac{\psi}{\sqrt{\lambda_t}+\psi}\frac{dN_t-\lambda_tdt}{\sqrt{\lambda_t}}-\hat{S}_{t-}^{\delta}
\left(\theta_t^1dW_t+\frac{\psi\sqrt{\lambda_t}}{\sqrt{\lambda_t}+\psi}dM_t\right)
\end{split}
\]
from which, being condition~\eqref{eq:ex} equivalent to
$\delta_t^1\hat{S}_t^1b_t^{1,2}+\delta_t^2\hat{S}_t^2b_t^{2,2}\leq\psi
\hat{S}_t^\delta,$ we note that the drift term is negative, in fact
\[
\begin{split}
&\left(\delta_t^1\hat{S}_t^1b_t^{1,2}+\delta_t^2\hat{S}_t^2b_t^{2,2}\right)\left(\theta_t^2
-\frac{\psi\sqrt{\lambda_t}}{\sqrt{\lambda_t}+\psi}\right)-\hat{S}_t^{\delta}
\left(\psi\left(\theta_t^2-\sqrt{\lambda_t}\right)+\frac{\psi\lambda_t}{\sqrt{\lambda_t}+\psi}\right)\leq\\
&\leq
\hat{S}_t^\delta\left(\psi\left(\theta_t^2-\frac{\psi\sqrt{\lambda_t}}{\sqrt{\lambda_t}
+\psi}\right)-\psi\left(\theta_t^2-\sqrt{\lambda_t}\right)-\frac{\psi\lambda_t}{\sqrt{\lambda_t}+\psi}\right)=0
\end{split}
\]
This is because the factor
$\theta_t^2-\frac{\psi\sqrt{\lambda_t}}{\sqrt{\lambda_t}+\psi}$ is
positive: the function $x\mapsto\frac{\psi x}{x+\psi}$ is always
increasing, so that, since in this section we are violating
Assumption~\ref{ass:test}, we have
\[
\theta_t^2-\frac{\psi\sqrt{\lambda_t}}{\sqrt{\lambda_t}+\psi}\geq\theta_t^2-\frac{\psi\theta_t^2}
{\theta_t^2+\psi}=\frac{\left(\theta_t^2\right)^2}{\theta_t^2+\psi}>0
\]
\end{proof}
The above theorem shows that, if trading is restricted, the investor
maximizing the expected logarithmic utility function (the GOP
maximizes also the expected log-utility) goes as far as the
admissibility constraints, expressed in Assumption~\ref{ass:ex},
allow. Notice that, since $\hat{Z}_t$ is a supermartingale without
being a local martingale, it cannot be the density process of an
ELMM.

For the sake of completeness we now study the property of
$\hat{Z}_t$ when Assumption~\ref{ass:test} holds true but we enforce
Assumption~\ref{ass:ex} as well. We shall see that $\hat{Z}_t$ is
again a supermartingale that is not  a local martingale and that
$S_t^{\delta_*}$ is the numeraire portfolio.
\begin{prop}  Under Assumptions~\ref{ass:1},~\ref{ass:2},~\ref{ass:test} and~\ref{ass:ex}, if
there exists $B\subset [0,T]$ with positive Lebesgue measure such
that for $t\in B$ one has
$\psi<\frac{\theta_t^2}{1-\frac{\theta_t^2}{\sqrt{\lambda_t}}}$,
then any benchmarked portfolio process is a supermartingale which is
not a local martingale.
\end{prop}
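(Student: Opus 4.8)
The plan is to run the argument of Theorem~\ref{theorem:ex} almost verbatim, the only change being that now Assumption~\ref{ass:test} \emph{holds}, so the constrained maximisation of the growth rate \eqref{eq:gg} behaves differently on $[0,T]\setminus B$. With $d=2$, $m=1$ the event-driven part of \eqref{eq:gg}, viewed as a function of $c_t^2:=\pi_t^1b_t^{1,2}+\pi_t^2b_t^{2,2}$, is strictly concave by Assumption~\ref{ass:test}, with unconstrained maximiser $c_t^{*2}=\theta_t^2/(1-\theta_t^2(\lambda_t)^{-1/2})$ as in \eqref{eq:c}. Imposing the convex constraint $c_t^2\le\psi$ of Assumption~\ref{ass:ex}, the constrained maximiser is $\tilde c_t^2:=\min\{\psi,c_t^{*2}\}$, the Wiener component being unchanged, $\tilde c_t^1=\theta_t^1$. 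Replacing $B$ by $\{t\in[0,T]:\psi<c_t^{*2}\}$ (still of positive Lebesgue measure by hypothesis), we have $\tilde c_t^2=\psi$ on $B$ and $\tilde c_t^2=c_t^{*2}$ on $[0,T]\setminus B$; observe that on $B$ the inequality $\psi<c_t^{*2}$ together with $\sqrt{\lambda_t}>\theta_t^2$ forces $\theta_t^2>0$, hence $c_t^{*2}>\psi>0$. Feeding $(\tilde c_t^1,\tilde c_t^2)$ into \eqref{eq:conti} gives $d\bar S_t^{\delta_*}=\bar S_{t-}^{\delta_*}\{\theta_t^1(\theta_t^1dt+dW_t)+\tilde c_t^2(\theta_t^2dt+dM_t)\}$.

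Next I would apply It\^o's formula to $\hat Z_t=1/\bar S_t^{\delta_*}$ exactly as in \eqref{eq:sz}, with $\psi$ replaced by $\tilde c_t^2$, obtaining $d\hat Z_t=-\hat Z_t\,g(\tilde c_t^2)\,dt-\hat Z_{t-}\big(\theta_t^1dW_t+\tfrac{\tilde c_t^2\sqrt{\lambda_t}}{\sqrt{\lambda_t}+\tilde c_t^2}\,dM_t\big)$, where $g(c):=c(\theta_t^2-\sqrt{\lambda_t})+c\lambda_t/(\sqrt{\lambda_t}+c)$. Two algebraic facts are needed: first, $\sqrt{\lambda_t}+c_t^{*2}=\lambda_t/(\sqrt{\lambda_t}-\theta_t^2)$, which gives both $g(c_t^{*2})=0$ and $c_t^{*2}\sqrt{\lambda_t}/(\sqrt{\lambda_t}+c_t^{*2})=\theta_t^2$; second, $g$ is strictly concave with $g(0)=0=g(c_t^{*2})$, hence $g>0$ on $(0,c_t^{*2})$. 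Consequently the drift of $\hat Z$ vanishes on $[0,T]\setminus B$ and is strictly negative on $B$ (where $\tilde c_t^2=\psi\in(0,c_t^{*2})$). Thus $\hat Z$ is a nonnegative local supermartingale carrying a continuous, strictly decreasing, predictable finite-variation part; being nonnegative it is a supermartingale by Fatou's lemma, and by uniqueness of the canonical decomposition of a special semimartingale the non-constant finite-variation part shows it is not a local martingale.

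For an arbitrary admissible benchmarked portfolio $\hat S^\delta=\bar S^\delta/\bar S^{\delta_*}$ I would repeat the product-formula computation of the proof of Theorem~\ref{theorem:ex}: $d\bar S^\delta$ from \eqref{eq:S} with $d=2$, $d\hat Z$ from the previous step, and the covariation bracket $d[\bar S^\delta,1/\bar S^{\delta_*}]$, again with $\psi$ replaced by $\tilde c_t^2$. After the cancellations the drift is $\big[\nu_t\big(\theta_t^2-\tfrac{\tilde c_t^2\sqrt{\lambda_t}}{\sqrt{\lambda_t}+\tilde c_t^2}\big)-\hat S_t^\delta\,g(\tilde c_t^2)\big]dt$ with $\nu_t:=\delta_t^1\hat S_{t-}^1b_t^{1,2}+\delta_t^2\hat S_{t-}^2b_t^{2,2}$, and Assumption~\ref{ass:ex} reads $\nu_t\le\psi\hat S_{t-}^\delta$. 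On $[0,T]\setminus B$ this drift is identically zero, since there $g(c_t^{*2})=0$ and $\theta_t^2-c_t^{*2}\sqrt{\lambda_t}/(\sqrt{\lambda_t}+c_t^{*2})=0$. On $B$ the factor $\theta_t^2-\psi\sqrt{\lambda_t}/(\sqrt{\lambda_t}+\psi)$ is strictly positive — this is exactly the hypothesis $\psi<c_t^{*2}$ rewritten — so bounding $\nu_t\le\psi\hat S_t^\delta$ and simplifying (the same telescoping as in Theorem~\ref{theorem:ex}, ending in $\psi\hat S_t^\delta[\sqrt{\lambda_t}-\sqrt{\lambda_t}]=0$) gives drift $\le0$, with equality only where the event-driven exposure saturates the constraint, $\nu_t=\psi\hat S_t^\delta$. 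Hence every benchmarked portfolio is a nonnegative local supermartingale and so a supermartingale by Fatou — in particular $S^{\delta_*}$ is the numeraire portfolio, unique by Proposition~\ref{proposition:num2} — and it is a strict supermartingale, not a local martingale, whenever its exposure does not coincide on $B$ with that of the GOP (as is the case for $\hat Z_t$).

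The only genuine work is (i) the bookkeeping in the product-formula computation and (ii) fixing the sign of $g$ on $[0,c_t^{*2}]$; everything else transcribes the proofs of Proposition~\ref{proposition:num} and Theorem~\ref{theorem:ex}, together with the elementary fact that a nonnegative local supermartingale with non-vanishing continuous predictable finite-variation part is a strict supermartingale that is not a local martingale. As in Theorem~\ref{theorem:ex}, the phrase ``not a local martingale'' is to be read modulo the trivial case $\delta=\delta_*$ (where $\hat S^{\delta_*}\equiv1$); the substantive content is that $\hat Z_t$ is a strict supermartingale while the GOP nonetheless retains the numeraire property.
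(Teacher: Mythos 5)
Your proof is correct and follows essentially the same route as the paper: identify the constrained GOP, compute the drift of $\hat Z_t$ and of $\bar S^\delta_t/\bar S^{\delta_*}_t$ via the product formula as in Theorem~\ref{theorem:ex}, and check the sign of the drift using the equivalence $\theta_t^2-\frac{\psi\sqrt{\lambda_t}}{\sqrt{\lambda_t}+\psi}>0\Leftrightarrow\psi<\frac{\theta_t^2}{1-\theta_t^2/\sqrt{\lambda_t}}$. If anything you are more careful than the paper, which tacitly assumes the constraint binds for all $t$ (i.e.\ uses \eqref{eq:gopex} on all of $[0,T]$), whereas you correctly take $\tilde c_t^2=\min\{\psi,c_t^{*2}\}$ and observe that the drift vanishes off $B$ and is strictly negative only on $B$ where the exposure does not saturate the constraint --- a caveat (e.g.\ $\delta=\delta_*$) that the paper's statement and proof gloss over.
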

\begin{proof} We start by noting that the condition
$\psi<\frac{\theta_t^2}{1-\frac{\theta_t^2}{\sqrt{\lambda_t}}}$
simply requires that Assumption~\ref{ass:ex} imposes a real
constraint on the investor, who would otherwise construct theoptimal
portfolio as if the condition expressed in Assumption~\ref{ass:ex}
was not present. Therefore, as soon as this condition holds, the
discounted GOP dynamics is the same as~\eqref{eq:gopex}, and the
SDEs satisfied by the inverse of the discounted GOP, namely
$\hat{Z}_t$, and by any benchmarked portfolio are the same as those
obtained in the proof of Theorem~\ref{theorem:ex}. The only thing
left to do is to check the negativity of the drift term in each of
the dynamics. For the former, the drift term of $\hat{Z}_t$, we get
from~\eqref{eq:sz}
\[
\begin{split}
-\hat{Z}_t\left(\psi\left(\theta_t^2-\sqrt{\lambda_t}\right)+\frac{\psi\lambda_t}
{\sqrt{\lambda_t}+\psi}\right)&=-\hat{Z}_t\left(\frac{\psi\left(\theta_t^2
-\sqrt{\lambda_t}\right)\left(\sqrt{\lambda_t}+\psi\right)+\psi\lambda_t}{\sqrt{\lambda_t}+\psi}\right)\\
&=-\hat{Z}_t\left(\frac{\psi\left(\psi\left(\theta_t^2-\sqrt{\lambda_t}\right)
+\theta_t^2\sqrt{\lambda_t}\right)}{\sqrt{\lambda_t}+\psi}\right)
\end{split}
\]
which is negative since $\psi>0$ and
$\psi<\frac{\theta_t^2}{1-\frac{\theta_t^2}{\sqrt{\lambda_t}}}$. The
latter follows in the same way as in the proof of
Theorem~\ref{theorem:ex}, noting that
$\theta_t^2-\frac{\psi\sqrt{\lambda_t}}{\psi+\sqrt{\lambda_t}}>0
\quad\Longleftrightarrow \quad
\psi<\frac{\theta_t^2}{1-\frac{\theta_t^2}{\sqrt{\lambda_t}}}$.
\end{proof}

\end{document}